\newtheorem{thm}{Theorem}[section]
\newtheorem{lem}[thm]{Lemma}
\newtheorem{prop}[thm]{Proposition}
\newtheorem{defin}[thm]{Definition}
\theoremstyle{definition}
\newtheorem{expl}[thm]{Example}
\newtheorem{remark}[thm]{Remark}
\newcommand{\dom}{\textnormal{dom}}
\newcommand{\Dom}{\textnormal{Dom}}
\newcommand{\R}{\mathbb{R}}
\newcommand{\N}{\mathbb{N}}
\newcommand{\wt}{\widetilde}
\newcommand{\vol}{\textnormal{vol}}
\title[On the existence of a neutral region]{On the existence of a neutral region} 
\author{Daniel Reem} 
\address{\noindent IMPA - Instituto Nacional de Matem\'atica Pura e Aplicada, Estrada Dona Castorina 110, Jardim Bot\^anico, CEP 22460-320, Rio de Janeiro, RJ,  Brazil.} 
\email{\noindent dream@impa.br } 
\keywords{Double territory diagram, double zone diagram, neutral region, territory diagram, 
Voronoi diagram, zone diagram.}
\subjclass[2010]{52C99,  68U05, 47H10}  
\begin{document}
\maketitle 
\begin{abstract}
Consider a given space, e.g., the Euclidean plane, and its 
decomposition into Voronoi regions induced by given sites. 
It seems intuitively clear  that each point in the space 
belongs to at least one of the  regions, i.e., no neutral 
region can exist. As simple counterexamples show this is not 
true in general, but we present a simple necessary 
and sufficient condition ensuring the non-existence of  a neutral 
region. We discuss a similar phenomenon concerning recent 
variations of Voronoi diagrams called zone diagrams, 
double zone diagrams, and (double) territory diagrams. 
These objects are defined in a somewhat implicit way and 
they also induce a decomposition of the space into regions. 
In several works it was claimed without providing a proof that some of these objects induce a 
decomposition in which a neutral region must exist.  We show that this assertion is true 
in a wide class of cases but not in general. We also discuss other 
properties related to the neutral region, among 
them a one related to the concentration of measure phenomenon.  
\end{abstract}

\section{\bf Introduction}\label{sec:Introduction}
Consider a given space, e.g., the Euclidean plane, and its decomposition into 
 Voronoi regions (Voronoi cells) induced by given sites. It seems 
intuitively clear that the regions form a subdivision, i.e., 
each point in the space belongs to at least one of the 
 regions. As a matter of fact, this is claimed  in various places, e.g., 
in \cite[pp. 345-6]{Aurenhammer},\cite[p. 513]{Fortune2004}, \cite[p. 47]{Mulmuley1994}. 
In these places it is assumed (explicitly or implicitly) 
that the number of sites is finite, 
an assumption which obviously implies the subdivision property. 
However, the assumption of finitely many sites is not always satisfied, e.g., in the case of Voronoi  diagrams 
in the context of the lattices such as in the geometry of numbers or crystallography 
or stochastic geometry (see Example \ref{ex:finite}). 
It turns out that in general a neutral cell may indeed exist, and hence 
one may ask whether it is possible to formulate a simple necessary and sufficient 
condition ensuring that no such a region exists. Such a condition is formulated in Section \ref{sec:NeutralVoronoi}, 
and is illustrated using various examples. To the best of our knowledge, 
the possibility of the existence of a neutral Voronoi cell was published  
 only in \cite{ReemISVD09,ReemGeometricStabilityArxiv}, but no systematic investigation 
 of this issue was carried out there. 

 A related phenomena, now viewed from the reverse direction, appears in connection with 
 recent variations of Voronoi diagrams called zone diagrams \cite{AMTn,KMT,KopeckaReemReich, ReemReichZone}. 
 As in the case of Voronoi diagrams, these geometric objects induce  
 a decomposition of the given space into regions, but in contrast with the Voronoi diagram, 
 in which the region $R_k$ associated with the site $P_k$ is the set of all points in the space 
 whose distance to $P_k$ is not greater than their distance to the other sites $P_j,j\neq k$, 
 in the case of a zone diagram the region $R_k$ is the set of all the points in the space 
 whose distance to $P_k$ is not greater to their distance to the other regions $R_j, j\neq k$. 

This somewhat implicit definition implies, after some thinking, that  a zone diagram 
 is a solution to a certain fixed point equation. Although its existence is not obvious 
 in advance, it seems clear that if a zone diagram does exist, it induces a decomposition 
 of the space into the regions (zones) $R_k$, and an additional region: the neutral one. 
 See Figure \ref{fig:NeutralZD_l2_0001_82Site_1inSite}. 
 This actually was claimed explicitly in several places \cite{AMT2,AMTn,DeBiasiKalantaris2011}, 
but this claim has not been proved.

\begin{figure}[t]
\begin{minipage}[t]{0.45\textwidth}
\begin{center}
{\includegraphics[scale=0.63]{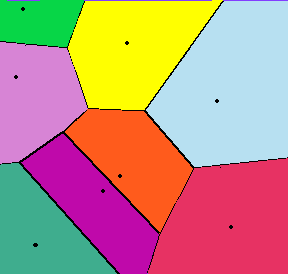}}
\end{center}
 \caption{Voronoi diagram of 8 point sites in a square in $(\R^2,\ell_2)$. No neutral region exists.}
\label{fig:NeutralVD_l2_0001_8Site_1inSite}
\end{minipage}
\hfill
\begin{minipage}[t]{0.45\textwidth}
\begin{center}
{\includegraphics[scale=0.63]{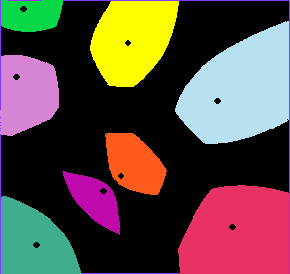}}
\end{center}
 \caption{The zone diagram [and hence a double zone diagram and a (double) territory diagram], 
 of the same sites given in Figure \ref{fig:NeutralVD_l2_0001_8Site_1inSite}. 
 The (black) neutral region is clearly seen.} 
\label{fig:NeutralZD_l2_0001_82Site_1inSite}
\end{minipage}
\end{figure}

As a matter of fact, the very first works discussing the concept of a zone diagram used the terminology 
``a Voronoi diagram with neutral zones''  \cite[p. 25]{AsanoTokuyama} and 
``Voronoi diagram with neutral zone'' (pages 336-8 and 343 of the 2006 conference version of  \cite{AMT2}, 
 the bottom of \cite[p. 1182]{AMTn}) for describing this concept. 
 In Section~\ref{sec:NeutralZone} we prove that the above claim about the existence of a neutral region 
 holds in a wide class of spaces (geodesic metric spaces) but not in general. 
We discuss similar phenomena occurring with variations of zone diagrams called double zone diagrams 
\cite{ReemReichZone}, territory diagrams \cite{DeBiasiKalantaris2011} 
(called subzone diagrams in the conference version of \cite{DeBiasiKalantaris2011}), and 
double territory diagrams which are introduced here (we also generalize the definition of territory diagrams 
from the setting of the Euclidean plane with point sites).  
Again, the existence of a neutral zone in the case of territory diagrams was claimed without any proof. 

Geodesic metric spaces satisfying a certain compactness assumption were discussed in the related setting  
of $k$-sectors and $k$-gradations \cite{ImaiKawamuraMatousekReemTokuyama2010CGTA}. 
Here however no compactness assumption is made. As shown in Section \ref{sec:Interpretation}, 
at least in this setting the neutral region can be used to justify the interpretation 
of zone diagram as a certain equilibrium between mutually hostile opponents. This interpretation 
 was mentioned without full justification in \cite{AMTn,ReemReichZone}. 
 In Section \ref{sec:ConcentrationOfMeasure} we show that not only 
 the neutral region is nonempty but it actually occupies a volume much larger  
than the volume of the ``interior regions''. 
 Here one considers double zone diagrams of separated point sites in a finite dimensional 
 Euclidean space. This phenomenon   is related  to (but definitely distinguished from) the phenomenon 
known as ``concentration of measure'' \cite[pp. 165-166]{Gruber2007},\cite[pp. 329-341]{Matousek2002}. 
The paper ends in Section ~\ref{sec:ConcludingRemarks} 
with a few remarks about possible lines of further investigation.

\section{\bf Preliminaries}\label{sec:Preliminaries} 
In this section we present our notation and basic definitions, as well as additional details 
regarding the basic notions. Throughout the text we will make use of tuples, the components of which are sets (which are subsets of a given set $X$). Every operation or relation between such tuples, or on a single tuple, is done component-wise. Hence, for example, if $K\neq \emptyset$ is a set of indices, and if $R=(R_k)_{k\in K}$ and $S=(S_k)_{k\in K}$ are two tuples  of subsets of $X$, then 
$R\subseteq S$ means $R_k\subseteq S_k$ for each $k\in K$. When $R$ is a tuple, the  notation $(R)_k$ is the $k$-th component 
of $R$, i.e, $(R)_k=R_k$. 

\begin{defin}\label{def:dom}
Given two nonempty subsets $P,A\subseteq X$, the dominance region 
$\dom(P,A)$ of $P$ with respect to $A$ is the set of all $x\in X$
whose distance to $P$ is not greater than their distance to $A$, i.e.,
\begin{equation}\label{eq:dom}
\dom(P,A)=\{x\in X: d(x,P)\leq d(x,A)\}.
\end{equation}
Here 
\begin{equation}\label{eq:dxA}
d(x,A)=\inf\{d(x,a): a\in A\} 
\end{equation}
and in general, for any  subsets $A_1,A_2$ we denote 
\begin{equation*}
d(A_1,A_2)=\inf\{d(a_1,a_2): a_1\in A_1,\,a_2\in A_2\}.
\end{equation*}
with the agreement that $d(A_1,A_2)=\infty$ if $A_1=\emptyset$ or $A_1=\emptyset$. 
\end{defin}

\begin{defin}\label{def:Voronoi}
Let $K$ be a set of at least 2 elements (indices), possibly
infinite. Given a  tuple $(P_k)_{k\in K}$ of nonempty subsets
$P_k\subseteq X$, called the generators or the sites, the Voronoi diagram  induced by this tuple is the tuple $(R_k)_{k\in K}$ of nonempty subsets
$R_k\subseteq X$, such that for all $k\in K$,
\begin{equation}\label{eq:VoronoiCell}
R_k=\dom(P_k,{\underset{j\neq k}\bigcup P_j})=\{x\in X: d(x,P_k)\leq d(x,P_j)\,\,\forall j\neq k,\,j\in K \}.
\end{equation}
 In other words, each $R_k$, called a Voronoi cell or a Voronoi region, is the set of
all $x\in X$ whose distance to $P_k$ is not greater than
its distance to any other site $P_j$, $j\neq k$. The set $X\backslash (\bigcup_{j\in K}R_j)$ is called 
the neutral region. 
\end{defin}

\begin{remark}\label{rem:GeneralDistance}
Voronoi diagrams can be defined in a more general context than metric spaces, and actually 
we will use such a setting later (Theorem \ref{thm:NoNeutral}). As in Definition \ref{def:Voronoi}, 
one starts with a nonempty set $X$ and a tuple $(P_k)_{k\in K}$ of nonempty subsets of 
$X$. However, now the distance function is $d:X^2\to [-\infty,\infty]$ and it is not 
limited to satisfy the axioms of a metric (e.g., the triangle inequality, being nonnegative 
and symmetric, etc.). The dominance region is defined as in \eqref{def:dom}, the distance 
$d(x,A)$ is defined as in \eqref{eq:dxA}, and  Voronoi cells are defined exactly 
as in \eqref{eq:VoronoiCell}. Voronoi diagrams based on Bregman distance \cite{BoissonnatNielsenNock2010}, 
convex distance functions \cite{ChewDrysdale}, and other examples are all particular 
cases of this setting. Such a setting, which seems to not has been discussed before, 
even generalizes the setting of $m$-spaces \cite{ReemReichZone} 
in which the only restriction on $d$ is that $d(x,x)\leq d(x,y)$ for any $x$ and $y$ (the 
previously mentioned cases are actually particular cases of $m$-space since 
in them $0=d(x,x)\leq d(x,y)$). 
\end{remark}

\begin{defin}\label{def:zone}
Let $K$ be a set of at least 2 elements (indices), possibly
infinite. Given a  tuple $(P_k)_{k\in K}$ of nonempty subsets
$P_k\subseteq X$, a zone diagram with respect to
that tuple is a tuple $R=(R_k)_{k\in K}$ of nonempty subsets
$R_k\subseteq X$ satisfying 
\begin{equation*}
R_k=\dom(P_k,\textstyle{\underset{j\neq k}\bigcup R_j})\quad \forall k\in K.
\end{equation*}
In other words, if we define $X_k=\{C: P_k\subseteq C\subseteq X\}$, then a zone diagram
is a fixed point of the mapping $\Dom:\underset{{k\in K}}\prod
X_k\to \underset{{k\in K}}\prod
X_k$, defined by 
\begin{equation}\label{eq:TZoneDef}
\Dom(R)=(\dom(P_k,\textstyle{\underset{j\neq k}\bigcup R_j}))_{k\in K}.
\end{equation}
A tuple $R=(R_k)_{k\in K}$ is called a double zone diagram if it is the fixed point of the second iteration $\Dom\circ \Dom$, i.e., $R=\Dom^2(R)$. A tuple $R=(R_k)_{k\in K}$ is called a territory diagram if $R\subseteq \Dom(R)$ and 
it is called a double territory diagram if $R\subseteq \Dom^2(R)$. 
\end{defin}

\begin{remark}\label{rem:Z-DZ-T-DT}
Some of the concepts mentioned in Definition \ref{def:zone} are related. 
Any zone diagram is obviously a territory diagram. It is also 
a double zone diagram as can be seen by applying $\Dom$ on $R=\Dom(R)$. 
Any double zone diagram is obviously a double territory diagram. 
A double territory diagram is not necessarily a territory diagram: 
take $X=\{-1,0,1\}\subset \R$, $(P_1,P_2)=(\{-1\},\{1\})$, 
$R=(\{-1,0\},\{0,1\})$; then $\Dom(R)=(\{-1\},\{1\})$ and 
$R \subsetneqq\Dom(R)$. A territory 
diagram is not necessarily a double territory diagram: take 
$X=[-1,1]\subset \R$,  $(P_1,P_2)=(\{-1\},\{1\})$, 
$R=([-1,0],\{1\})$; then $\Dom(R)=([-1,0],[0.5,1])$, 
$\Dom^2(R)=([-1,-0.25],[0.5,1])$, and hence $R\subsetneqq \Dom^2(R)$. 
\end{remark}

\begin{remark}\label{rem:TerritoryVoronoi}
The components of any territory and double territory 
diagrams are contained in the Voronoi cells of their sites. Indeed, 
the Voronoi cells corresponding to the tuple $P=(P_k)_{k\in K}$ 
of sites is nothing but $\Dom(P)$. By the definition $\Dom$ and the 
space $\prod_{k\in K} X_k$ of tuples we have 
$P\subseteq R$ and $P\subseteq \Dom(R)$ for any tuple $R$ in this space. Thus the 
anti monotonicity of $\Dom$ (see Lemma \ref{lem:ImportedLem}\eqref{item:Monotone}) 
implies that 
$\Dom(R)\subseteq\Dom(P)$ and $\Dom^2(R)\subseteq \Dom(P)$ and 
the assertion follows by taking $R$ to be a territory or a double territory diagram. 
\end{remark}

\begin{remark}\label{rem:TeritoryExamples}
Examples (illustrations) of 2-dimensional zone diagrams in various settings 
can be found in \cite{AMTn,ImaiKawamuraMatousekReemTokuyama2010CGTA,
KopeckaReemReich,ReemZoneCompute, ReemReichZone}. 
Examples of double zone diagrams which are not zone diagrams can 
be found in \cite{ReemZoneCompute,ReemReichZone}. 
 Examples (including illustrations) of territory diagrams which are not zone 
diagrams can be found in \cite{DeBiasiKalantaris2011}. Additional illustrations 
can be found in Figures \ref{fig:NeutralZD_l2_0001_82Site_1inSite}, 
\ref{fig:ZD-4iterations-2sites-2inSite-0002-New}-\ref{fig:DZ-Max-0002-2sites2inSite-l1}, 
and \ref{fig:ZD-Flowers-4Iteration-17Sites-0005}. 

Existence (and sometimes uniqueness) proofs of zone diagrams in certain settings 
can be found in \cite{AMTn,KMT,KopeckaReemReich,ReemReichZone}. For 
our purposes we only need to know that double zone diagrams 
always exist \cite{ReemReichZone} and the same is true for territory 
diagrams and double territory diagrams. As a matter of fact, it is quite easy to construct explicit examples of territory 
and double territory diagrams: we can simply 
start with $P=(P_k)_{k\in K}$ and iterate it using $\Dom$. As explained in 
Remark ~\ref{rem:TerritoryVoronoi}, for each tuple $R$ one has $P\subseteq \Dom(R)$ and $P\subseteq \Dom^2(R)$. 
Now, since $\Dom$ is antimonotone and since $\Dom^2$ is monotone the inequality  
\begin{equation*}
P\subseteq \Dom^2(P)\subseteq \Dom^4(P)\subseteq\ldots\subseteq\ldots
\subseteq \Dom^3(P)\subseteq\Dom(P)
\end{equation*}
follows. Hence any even power is a territory and double territory diagram [which 
is usually not a (double) zone diagram].
\end{remark}

We finish this section with the definition of geodesic metric spaces. 
\begin{defin}\label{def:GeodesicMetric}
Let $(X,d)$ be a metric space. Let $x,y\in S\subseteq X$.  The subset $S$ is
called a metric segment between $x$ and $y$  if there exists
 an isometry $\gamma$ (i.e., $\gamma$ preserves distances) from the real line segment $[0,d(y,x)]$ onto $S$ such that $\gamma(0)=x$ and  $\gamma(d(y,x))=y$. We denote $S=[x,y]_{\gamma}$, or simply $S=[x,y]$. If between all points $x,y\in X$ there exists a metric segment, then $(X,d)$ is called  a  geodesic metric space.
\end{defin}
Simple and familiar examples of geodesic metric spaces are: the Euclidean plane, any normed space, 
any convex subset of a normed space, spheres, complete Riemannian manifolds \cite[pp. 25-28]{Jost}, 
and hyperbolic spaces \cite[pp. 538-9]{ReichShafrir}.

\section{\bf A neutral Voronoi region}\label{sec:NeutralVoronoi}
As mentioned in the introduction, although it might seem somewhat surprising, there are simple 
examples showing the existence of a neutral Voronoi. See Figure~ \ref{fig:NeutralRegionPointSites}. 
A quick glance at this figure shows that there are infinitely many sites. On the other hand, it can be easily verified 
that a sufficient condition for the non-existence of a neutral region is having finitely many sites. 
 But is it a necessary condition? The answer is no, as shown in Example \ref{ex:finite}. A more careful look at Figure  \ref{fig:NeutralRegionPointSites}  shows that the set obtained from taking the union of the sites has an 
 accumulation point, while  in the cases mentioned in Example \ref{ex:finite} no such accumulation point exists. 
 Hence it is natural to guess that a neutral region does 
 not exist if and only if no accumulation point exists. This is indeed true whenever the dimension if 
 finite or the space is compact (see Proposition \ref{prop:Accumulation}), 
 but Example \ref{ex:NoCluster} presents a simple infinite dimensional counterexample. 

As a result, if one is interested in a general necessary and  sufficient condition, then a different property 
should be detected. It turns out that this property is nothing but the existence of 
a nearest site. This property holds in any metric space, and, interestingly, actually even when 
most of the properties of the distance function (e.g., the triangle inequality, symmetry, non-negativeness) 
are removed (see Remark \ref{rem:GeneralDistance} for a discussion on Voronoi diagrams obtained 
from such distance functions). Despite this general setting, the proof of the corresponding theorem 
below is very simple. 
 
\begin{thm}\label{thm:NoNeutral}
Let $X$ be a nonempty set and let $d:X^2\to[-\infty,\infty]$. 
Given a tuple of nonempty subsets $(P_k)_{k\in K}$ contained in $X$, 
there exists no neutral Voronoi region if and only if for each $x\in X$ 
there exists a nearest site, namely,  there exists $j\in K$ such that 
\begin{equation}\label{eq:NN}
\inf\{d(x,P_k): k\in K\}=d(x,P_{j}). 
\end{equation}
Equivalently, $d(x,\cup_{k\in K}P_k)$ is index attained, namely there exists $j\in K$ 
such that 
\begin{equation}\label{eq:Index}
d(x,\bigcup_{k\in K}P_k)=d(x,P_{j}). 
\end{equation}
\end{thm}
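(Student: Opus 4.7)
The proof plan is essentially to unwind the definition of the neutral region and translate membership in some Voronoi cell into the statement that the infimum of distances is attained.

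First I would rewrite the condition ``no neutral region'' in contrapositive/complementary form: it says precisely that for every $x \in X$, there is some index $k \in K$ with $x \in R_k$. By \eqref{eq:VoronoiCell}, $x \in R_k$ means $d(x, P_k) \leq d(x, P_j)$ for all $j \neq k$ (and trivially for $j=k$), which is the same as $d(x, P_k) = \inf\{d(x, P_j) : j \in K\}$. So the equivalence in \eqref{eq:NN} reduces to a pure rewriting: the infimum of the index-set $\{d(x, P_k) : k \in K\}$ is attained at some $j$ if and only if $x$ lies in the Voronoi cell $R_j$.

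For the forward direction I would start with $x \in X$, use the absence of a neutral region to get a $k$ with $x \in R_k$, and then observe that the defining inequality of $R_k$ is exactly the statement that $d(x, P_k)$ realizes $\inf_{j \in K} d(x, P_j)$. For the backward direction I would reverse the argument: if some $j$ attains $\inf_{k \in K} d(x, P_k) = d(x, P_j)$, then by definition $x \in R_j$, so $x$ is not in the neutral region. The argument uses only the definition of $R_k$ in \eqref{eq:VoronoiCell} and never invokes symmetry, triangle inequality, or nonnegativity of $d$, which is why it remains valid in the fully general setting of Remark \ref{rem:GeneralDistance}.

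For the ``equivalently'' clause involving \eqref{eq:Index}, I would simply note that for any family of nonempty subsets,
\begin{equation*}
d\bigl(x, \textstyle\bigcup_{k \in K} P_k\bigr) = \inf\{d(x, p) : p \in \textstyle\bigcup_{k \in K} P_k\} = \inf_{k \in K} \inf_{p \in P_k} d(x, p) = \inf_{k \in K} d(x, P_k),
\end{equation*}
so the condition \eqref{eq:Index} that this global infimum is realized by some single index $j$ is literally identical to \eqref{eq:NN}.

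The entire argument is a matter of unpacking definitions, so there is no real obstacle; the only thing to be slightly careful about is making sure the bookkeeping works in the generalized setting where $d$ may take values in $[-\infty, \infty]$ and need not be symmetric, but since the proof only manipulates inequalities among the quantities $d(x, P_k)$ and never compares distances via the triangle inequality, nothing breaks.
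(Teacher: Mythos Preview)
Your proposal is correct and follows essentially the same approach as the paper: both directions of the main equivalence are obtained by unwinding the definition of the Voronoi cell, and the ``equivalently'' clause is handled by showing $d(x,\bigcup_k P_k)=\inf_k d(x,P_k)$. The only cosmetic difference is that you establish this last equality by a direct chain of infima, whereas the paper argues $\alpha\leq\beta$ and then derives a contradiction from $\alpha<\beta$; your version is arguably a bit cleaner but the content is identical.
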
 
\begin{proof}
If for each $x\in X$ there exists $j\in K$ such that \eqref{eq:NN} holds,  
then  $d(x,P_j)\leq d(x,P_k)$ for any $k\in K$. Thus every $x\in X$ is 
in the Voronoi cell of some site $P_j$ and hence 
the neutral region is empty. On the other hand, suppose 
that there exists no neutral region, that is, any $x\in X$ belongs to the cell of $P_j$  
for some $j\in K$. This means that $d(x,P_j)\leq d(x,P_k)$ for any $k\in K$. Thus 
$d(x,P_j)\leq \inf\{d(x,P_k): k\in K\}$. But obviously $\inf\{d(x,P_k): k\in K\}\leq d(x,P_j)$ 
since $j\in K$. This implies equality and proves the first part of the assertion. 

The second part, namely \eqref{eq:Index}, is a simple consequence of the fact that 
\begin{equation*}
\alpha:=d(x,\bigcup_{k\in K}P_k)=\inf\{d(x,P_k): k\in K\}=:\beta. 
\end{equation*}
Indeed,  $\alpha\leq d(x,P_k)$ for all $k\in K$ by the definition of $\alpha$, 
so $\alpha\leq \beta$.  If $\alpha<\beta$, then there is 
$y\in \cup_{k\in K}P_k$ such that $d(x,y)<\beta$. Since $y\in P_{k}$ 
for some $k\in K$ we have $d(x,P_{k})\leq d(x,y)<\beta$, a contradiction 
with the definition of $\beta$.
\end{proof}
The following proposition gives additional  sufficient and necessary conditions for the non-existence 
of a neutral region in a more familiar setting. The property described in 
Part \eqref{item:FinitelyCompact} is sometimes called finitely compactness \cite{KapelusznyKuczumowReich} 
and it holds, e.g., when the space is compact or finite dimensional. 

\begin{prop}\label{prop:Accumulation}
Let $(X,d)$ be a metric space. Let $(P_k)_{k\in K}$ be a tuple of nonempty subsets in $X$. 
\begin{enumerate}[(a)]
\item\label{item:External} If there exists no neutral region in $X$ and the sites are closed sets, 
then $\bigcup_{j\in K}P_j$ has no external accumulation point (an accumulation point $y\notin \bigcup_{j\in K}P_j$). 
\item\label{item:FinitelyCompact} Suppose that $(X,d)$ has the property that any bounded infinite subset 
has an accumulation point. If $\bigcup_{j\in K}P_j$ has no accumulation points, then there does not exist a 
neutral region in $X$. 
\end{enumerate}
\end{prop}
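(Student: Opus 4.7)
The plan is to translate both parts via Theorem~\ref{thm:NoNeutral}, which says that the absence of a neutral region is equivalent to the existence, for every $x\in X$, of an index $j\in K$ attaining $\inf_{k\in K} d(x,P_k)$, equivalently $d(x,P_j)=d(x,\bigcup_{k\in K}P_k)$.

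For part~\eqref{item:External}, I would argue by contradiction. Suppose $y\notin \bigcup_{j\in K}P_j$ is an accumulation point of $\bigcup_{j\in K}P_j$. Then by the definition of accumulation point one has $d(y,\bigcup_{k\in K}P_k)=0$. Since no neutral region exists, Theorem~\ref{thm:NoNeutral} supplies some $j\in K$ with $d(y,P_j)=d(y,\bigcup_{k\in K}P_k)=0$. As $P_j$ is closed, this forces $y\in P_j\subseteq \bigcup_{k\in K}P_k$, contradicting that $y$ is external.

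For part~\eqref{item:FinitelyCompact}, I would fix an arbitrary $x\in X$ and, appealing to Theorem~\ref{thm:NoNeutral}, reduce the task to producing some $j\in K$ with $d(x,P_j)=d(x,\bigcup_{k\in K}P_k)=:\alpha$. Choose a sequence $(y_n)_{n\in\N}$ in $\bigcup_{k\in K}P_k$ with $d(x,y_n)\to\alpha$. The sequence is bounded (say by $\alpha+1$ after discarding finitely many terms). Now split into two cases. If the set $\{y_n:n\in\N\}$ is finite, some value $y$ is attained for infinitely many indices, so by continuity of $d(x,\cdot)$ one has $d(x,y)=\alpha$; picking any $j\in K$ with $y\in P_j$ gives $d(x,P_j)\leq d(x,y)=\alpha\leq d(x,P_j)$, hence equality, as required. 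If instead the set is infinite, it is a bounded infinite subset of $X$, so by the finite-compactness hypothesis it has an accumulation point $z\in X$; this $z$ is then also an accumulation point of the larger set $\bigcup_{k\in K}P_k$, contradicting the hypothesis of~\eqref{item:FinitelyCompact}. So only the first case occurs, a nearest site exists for every $x$, and Theorem~\ref{thm:NoNeutral} concludes the proof.

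The argument is largely routine once Theorem~\ref{thm:NoNeutral} is invoked, so I do not expect a real obstacle; the one point that deserves care is the finite-set case in~\eqref{item:FinitelyCompact}, where one must not assume an infimizing sequence automatically has distinct terms, and the observation that an accumulation point of a subset is an accumulation point of any superset, which is what makes the finite-compactness hypothesis bite against the ``no accumulation points'' assumption.
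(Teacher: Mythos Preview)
Your proof is correct. Both parts rely on Theorem~\ref{thm:NoNeutral} exactly as the paper does, but the execution differs in small, instructive ways.

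For part~\eqref{item:External}, the paper shows directly that an external accumulation point $x$ can have no nearest site: for any $k$, closedness gives $d(x,P_k)>0$, and the accumulation property then supplies a strictly closer site. Your argument is shorter: you observe that $d\bigl(y,\bigcup_k P_k\bigr)=0$, invoke the index-attained form~\eqref{eq:Index} to get $d(y,P_j)=0$, and conclude $y\in P_j$ by closedness. Both are fine; yours exploits the equivalence in Theorem~\ref{thm:NoNeutral} a bit more efficiently.

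For part~\eqref{item:FinitelyCompact}, the paper argues by contradiction from a point $x$ in the neutral region and builds a sequence of \emph{distinct} sites with strictly decreasing distances to $x$, obtaining automatically an infinite bounded set of distinct points. You instead take a generic minimizing sequence and split into the cases where its range is finite or infinite. The underlying mechanism---a bounded infinite subset yields an accumulation point of $\bigcup_k P_k$, contradicting the hypothesis---is identical; the paper's construction just avoids your finite-range case by design. Your careful handling of that case (a constant subsequence forces $d(x,y)=\alpha$) is exactly the point one must not skip, and you flagged it correctly.
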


\begin{proof}
We first prove \eqref{item:External}. Suppose by way of negation that some $x\in X$ is an external accumulation point of 
$\cup_{j\in K}P_j$. We claim that $x$ does not have any nearest neighbor. 
Indeed, let $k\in K$. Then $r=d(x,P_k)>0$, otherwise  $x\in P_k\subseteq \bigcup_{j\in K}P_j$, 
a contradiction. Since $x$ is an accumulation point of $\cup_{j\in K}P_j$, the open ball $B(x,r)$ contains a 
point $y\in \bigcup_{j\in K}P_j$. By the definition of $r$ we have $y\in P_{i}$ for 
some $i\neq k$, $i\in K$. Hence $d(x,P_{i})\leq d(x,y)<d(x,P_k)$. Since $k\in K$ was arbitrary 
this shows that no site $P_k$ can be a nearest site of $x$. 
By Theorem \ref{thm:NoNeutral} it follows that $x$ is in the neutral region, a contradiction. 

Now consider Part \eqref{item:FinitelyCompact} and suppose by way of contradiction that 
the neutral region is nonempty. Let $x$ be some point 
in the neutral region. Let $k_1\in K$. Theorem \ref{thm:NoNeutral} implies that 
$P_{k_1}$ is not the nearest site of $x$. Therefore $d(x,P_{k_2})<d(x,P_{k_1})$ for 
some $k_2\neq k_1$, $k_2\in K$. In particular $r_1=d(x,P_{k_1})>0$ and 
there exists a point $x_2\in P_{k_2}$ in the open ball $B(x,r_1)$. As before,  
 Theorem \ref{thm:NoNeutral} implies that $P_{k_2}$ is not the nearest site of $x$. 
  Therefore $d(x,P_{k_3})<d(x,P_{k_2})$ for 
some $k_3\neq k_2$, $k_3\in K$. We continue in this way and construct an infinite 
sequence $(P_{k_n})_{n=1}^{\infty}$ of different sites having the property 
that $d(x,P_{k_{n+1}})<d(x,P_{k_{n}})$ for any $n\in \N$. Hence there exists 
a sequence of points $(x_n)_{n=2}^{\infty}$ satisfying 
$d(x,P_{k_n})\leq d(x,x_n)<d(x,P_{k_{n-1}})\leq d(x,x_{n-1})$ and $x_n\in P_{k_n}$ 
for any $n\in \N$. In particular no two points from this sequence coincide. 
This set of points is an infinite set contained in the bounded ball $B(x,r_1)$. 
Hence it has an accumulation point, which is obviously an accumulation point 
of $\cup_{j\in K}P_j$, a contradiction.
\end{proof}

\begin{expl}\label{ex:finite}
The nearest site condition mentioned in Theorem \ref{thm:NoNeutral} obviously 
holds when $K$ is finite. A simple verification shows that the condition also holds 
when for each $x\in X$ there exists a ball centered at $x$ which 
intersects finitely many sites from $\cup_{j\in K}P_j$, since in this case 
the nearest site is one of the finitely many sites intersected by the ball 
(the intersection may include infinitely many points, but they belong to finitely many sites). 
This happens, e.g., when the sites form a lattice, as in the case of the geometry of numbers
in $\R^m$ \cite{GruberLek},  crystallography \cite{AshcroftMermin} (under the names  
``the Brillouin zone'' or ``the Wigner-Seitz cell''), 
 coding \cite[pp. 66-69,  451-477]{ConwaySloane}, or a somewhat random (infinite) distribution, such 
 as  Poisson Voronoi diagrams \cite[pp. 39, 291-410]{OBSC}. Another example:  $P_k=\R\times \{k\}$, $k\in \N$.
\end{expl}

\begin{expl}\label{ex:H}
A simple example when the nearest site condition fails: 
 $(X,d)$ is the Euclidean plane,  $P_k=\{(0,a_k)\}$ for all $k\in\N$ 
 (or, alternatively, $P_k=\R\times\{a_k\}$  for all $k\in\N$) where $a_k>0$  and
  $\lim_{k\to\infty}a_k=0$. The lower 
halfspace $H=\{(x_1,x_2): x_2\leq 0\}$ is the neutral region. See Figure~ \ref{fig:NeutralRegionPointSites}. 
A variation of this example was mentioned in \cite{ReemISVD09}. 
\end{expl}
\begin{figure}
\begin{minipage}[t]{0.42\textwidth}
\begin{center}
{\includegraphics[scale=0.5]{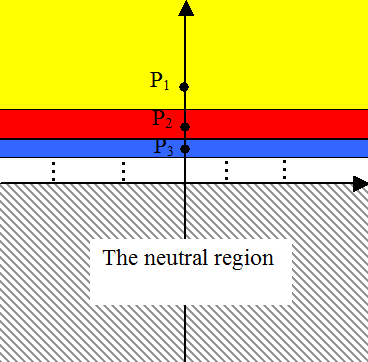}}
\end{center}
 \caption{A neutral Voronoi region induced by infinitely many point sites converging to 
 the origin  in the Euclidean plane (Example \ref{ex:H}).} 
\label{fig:NeutralRegionPointSites}
\end{minipage}
\end{figure}

\begin{expl}\label{ex:cluster}
An illustration of Proposition \ref{prop:Accumulation}\eqref{item:External} was actually given 
in Example ~\ref{ex:H}:  the point $(0,0)$ is the unique (external) accumulation point. The neutral 
region is however a much larger set than the set of accumulation points of $\cup_{k\in K}P_k$. 
 As another illustration of Proposition ~\ref{prop:Accumulation}\eqref{item:External}, take $S$ to be a 
 dense set in $X$ which is not $X$, e.g., the set of all points in the 
plane with rational coordinates, and let $K=S$. For each $k\in K$ define $P_k=\{k\}$. Then the neutral 
region is the complement of $S$. 
\end{expl}

\begin{expl}\label{ex:NoCluster}
This example shows that a neutral region  may exist  if the union $\cup_{j\in K}P_j$  does not 
have any accumulation point but the space is not finitely compact. Let $(X,d)$ to be the infinite dimensional space $\ell_2$ of all sequences $(x_n)_{n=1}^{\infty}$ 
of real numbers satisfying $\sum_{n=1}^{\infty}|x_n|^2<\infty$. Let $K=\N$ and let $e_k$ be the $k$-th basis element, 
i.e., the sequence whose $k$-th component is 1 and the other components are 0. 
Let  $P_k=\{((k+1)/k)e_k\}, k\in K$ be the sites. 
Then the point $x=0$ does not have any nearest site since $d(x,P_{k+1})<d(x,P_k)$ for any $k\in K$. 
Hence it in the neutral region. However, $\bigcup_{j\in K}P_j$  does not have any accumulation point 
since $d(P_k,P_j)\geq \sqrt{2}$ for any $k,j\in K$, $k\neq j$.  
\end{expl}

\section{\bf A neutral (double, territory) zone}\label{sec:NeutralZone}
In this section we discuss the existence of a neutral region (zone) in the context 
of zone diagrams, double zone diagrams, and (double) territory diagrams. 
We need the following lemma whose proof can be found e.g., in \cite[Lemma 5.4]{ReemReichZone} 
(Part ~\eqref{item:Monotone}) and \cite[Lemma 6.3, Lemma 6.8, Remark 6.9]{ReemZoneCompute} (Parts  \eqref{item:Closure}, \eqref{item:Dom3Disjoint}). 

\begin{lem}\label{lem:ImportedLem}
Let $(X,d)$ be a metric space  and let $P=(P_k)_{k \in K}$ be a tuple of  nonempty subsets in $X$. 
\begin{enumerate}[(a)]
\item\label{item:Monotone} $\Dom$ is antimonotone, i.e., $\Dom(R)\subseteq \Dom(S)$ whenever $S\subseteq R$; $\Dom^2$ is monotone, that is, $R\subseteq S\Rightarrow \Dom^2(R)\subseteq \Dom^2(S)$. 
\item \label{item:Closure} $\Dom(\overline{R})=\Dom(R)$.
\item\label{item:Dom3Disjoint} Suppose that $(X,d)$ is a geodesic metric space and that 
\begin{equation}\label{eq:r_k}
r_k:=\inf\{d(P_k,P_j): j\neq k\}>0 \quad \forall k\in K. 
\end{equation}
Then $(r_k/8)+(r_j/8)\leq d((\Dom^{\gamma}P)_k,(\Dom^{\gamma} P)_j)$ for any $j,k\in K$,$j\neq k$ and any
$\gamma\geq 2$. 
\end{enumerate}
\end{lem}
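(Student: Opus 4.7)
Parts (a) and (b) should follow directly by unpacking the definitions. For (a), if $S\subseteq R$ componentwise, then $\bigcup_{i\neq k}S_i\subseteq\bigcup_{i\neq k}R_i$ for each $k$, so $d(x,\bigcup_{i\neq k}R_i)\leq d(x,\bigcup_{i\neq k}S_i)$; plugging into the defining inequality of $\dom(P_k,\cdot)$ reverses the inclusion and yields $\Dom(R)_k\subseteq\Dom(S)_k$. Two applications give monotonicity of $\Dom^2$. For (b), I would invoke the standard metric-space identity $d(x,A)=d(x,\overline{A})$. Even though in general $\bigcup_{i}\overline{R_i}\subsetneq\overline{\bigcup_{i}R_i}$, one still has $d(x,\bigcup_{i\neq k}\overline{R_i})=\inf_{i\neq k}d(x,\overline{R_i})=\inf_{i\neq k}d(x,R_i)=d(x,\bigcup_{i\neq k}R_i)$, so the two dominance regions coincide.

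Part (c) is the main obstacle. My plan is to first dispose of the base case $\gamma=2$ by a clean triangle-inequality argument, and then extend to $\gamma\geq 3$ using the geodesic hypothesis. For the base case, $\Dom^2P\subseteq\Dom P$ (apply (a) to $P\subseteq\Dom P$), so $x\in(\Dom^2P)_k$ lies in the Voronoi cell $V_k$ and $y\in(\Dom^2P)_j$ lies in $V_j$. The defining inequality of $(\Dom^2P)_k$ then yields $d(x,P_k)\leq d(x,V_j)\leq d(x,y)$ (the second step because $y\in V_j$), together with the symmetric bound $d(y,P_j)\leq d(x,y)$. The triangle inequality $d(P_k,P_j)\leq d(x,P_k)+d(x,y)+d(y,P_j)\leq 3d(x,y)$, combined with $d(P_k,P_j)\geq\max\{r_k,r_j\}\geq(r_k+r_j)/2$, gives $d(x,y)\geq(r_k+r_j)/6\geq(r_k+r_j)/8$. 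Notably, geodesics are not needed for $\gamma=2$.

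For $\gamma\geq 3$ the argument just sketched breaks, because the chain $\Dom^2 P\subseteq\Dom^4 P\subseteq\cdots\subseteq\Dom^3 P\subseteq\Dom P$ alternates with parity, so a point $y\in(\Dom^\gamma P)_j$ need not lie in $(\Dom^{\gamma-1}P)_j$ and the pivotal step $d(x,(\Dom^{\gamma-1}P)_j)\leq d(x,y)$ is no longer automatic. My remedy uses the geodesic hypothesis in two complementary ways: first, pick a point $m$ on the geodesic $[x,y]$ at which the continuous function $d(\cdot,P_k)-d(\cdot,P_j)$ vanishes (it is nonpositive at $x\in V_k$ and nonnegative at $y\in V_j$), so that $d(m,P_k)=d(m,P_j)\geq d(P_k,P_j)/2$; second, use length-minimizing paths from $x$ and $y$ to nearest points of $P_k\subseteq(\Dom^{\gamma-1}P)_k$ and $P_j\subseteq(\Dom^{\gamma-1}P)_j$ to control the cost of bringing $x$ and $y$ into those iterates. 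Combining the lower bound on $d(m,P_k)$ with the triangle-inequality estimates involving $d(x,P_k)$, $d(y,P_j)$ and $d(x,y)$ forces $d(x,y)\geq(r_k+r_j)/8$; the factor $1/8$ (rather than the sharper $1/6$ of the base case) reflects one halving at the Voronoi bisector and a second when transferring between iteration levels. The geodesic hypothesis is indispensable here: absent length-minimizing interpolation there is no way to bridge the different iterates of $\Dom$, and indeed the statement can fail in non-geodesic metric spaces.
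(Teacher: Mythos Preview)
The paper does not actually prove this lemma: it imports it, citing \cite[Lemma~5.4]{ReemReichZone} for part~(a) and \cite[Lemmas~6.3, 6.8, Remark~6.9]{ReemZoneCompute} for parts~(b) and~(c). Your arguments for (a) and (b) are correct, and your $\gamma=2$ case of (c) is also correct---it is essentially the computation the paper itself carries out later in Lemma~\ref{lem:disjoint}\eqref{item:RDomR}, and indeed needs no geodesic hypothesis.

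Your plan for $\gamma\geq 3$, however, has a real gap. The bisector point $m\in[x,y]$ with $d(m,P_k)=d(m,P_j)\geq d(P_k,P_j)/2$ gives no control over $d(x,m)$ or $d(m,y)$: the natural triangle inequalities yield $d(x,y)\geq 2d(m,P_k)-d(x,P_k)-d(y,P_j)$, but there is no upper bound on $d(x,P_k)$ or $d(y,P_j)$. Your ``second way'' invokes only the bare inclusion $P_j\subseteq(\Dom^{\gamma-1}P)_j$, which gives $d(x,(\Dom^{\gamma-1}P)_j)\leq d(x,P_j)$---the wrong direction to combine with the defining inequality $d(x,P_k)\leq d(x,(\Dom^{\gamma-1}P)_j)$. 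What is missing is a \emph{quantitative} inclusion: one first checks (in any metric space, by a short triangle-inequality argument) that the open ball of radius $r_j/4$ about $P_j$ lies inside $(\Dom^2 P)_j$; the geodesic hypothesis then upgrades this to $d(x,(\Dom^2 P)_j)\leq d(x,P_j)-r_j/4$. Since $\Dom^{\gamma}P\subseteq\Dom^3 P$ for every $\gamma\geq 2$, it suffices to treat $\gamma=3$, and then the two inequalities $d(x,P_k)\leq d(x,P_j)-r_j/4$ and $d(y,P_j)\leq d(y,P_k)-r_k/4$, added and combined with $|d(x,P_j)-d(y,P_j)|\leq d(x,y)$ and $|d(x,P_k)-d(y,P_k)|\leq d(x,y)$, give $(r_k+r_j)/4\leq 2d(x,y)$.
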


\begin{lem}\label{lem:disjoint}
Let $(X,d)$ be a metric space, let $P=(P_k)_{k\in K}$ be a tuple of nonempty 
subsets of $X$. Suppose that $R=(R_k)_{k\in K}$ satisfies $P_k\subseteq R_k\subseteq X$ 
for each $k\in K$. 
 \begin{enumerate}[(a)] 
 \item\label{item:P_kR_kDisjoint} Suppose that $R\subseteq \Dom(R)$. If 
 $\overline{P_k}\bigcap \overline{P_j}=\emptyset$ whenever $j\neq k$, then
  $R_k\bigcap R_j=\emptyset$ for each $j,k\in K,\,j\neq k$.
\item\label{item:RDomR} 
Suppose that \eqref{eq:r_k} holds.
If $R\subseteq \Dom(R)$, then the components of $R$ satisfy 
$\max\{r_k,r_j\}/3\leq d(R_k,R_j)$ for each $j,k\in K, k\neq j$.
\item\label{item:RDom2R} Suppose that $R\subseteq \Dom^2(R)$, that $(X,d)$ is a geodesic metric space, 
and that \eqref{eq:r_k} holds. 
Then the components of $R$ satisfy $(r_k/8)+(r_j/8)\leq d(R_k,R_j)$ for each $j,k\in K, k\neq j$.
\end{enumerate}
\end{lem}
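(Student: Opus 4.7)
The plan is to dispose of (a) and (b) by short triangle-inequality arguments from the defining inclusion $R \subseteq \Dom(R)$, and to reduce (c) to Lemma \ref{lem:ImportedLem}\eqref{item:Dom3Disjoint} via an auxiliary tuple. For (a), I would argue by contradiction: if $x \in R_k \cap R_j$ with $j \neq k$, then $R \subseteq \Dom(R)$ places $x$ in $\Dom(R)_k = \dom(P_k, \bigcup_{i \neq k} R_i)$, so $d(x, P_k) \leq d(x, R_j) \leq d(x, x) = 0$; hence $x \in \overline{P_k}$, and symmetrically $x \in \overline{P_j}$, contradicting the disjointness of closures. For (b), given $x \in R_k$, $y \in R_j$, and $L := d(x, y)$, the same inclusion yields $d(x, P_k) \leq d(x, R_j) \leq L$ and $d(y, P_j) \leq d(y, R_k) \leq L$; a routine triangle-inequality estimate on set-to-set distances (using almost-optimal points in $P_k$ and $P_j$) then gives $r_k \leq d(P_k, P_j) \leq d(x, P_k) + L + d(y, P_j) \leq 3L$, and symmetrically $r_j \leq 3L$, so that $d(R_k, R_j) \geq \max\{r_k, r_j\}/3$ after taking the infimum over $x, y$.

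For (c), my plan is to introduce the auxiliary tuple $\widetilde R := \Dom(R)$ and apply to it the argument behind Lemma \ref{lem:ImportedLem}\eqref{item:Dom3Disjoint}. Antimonotonicity of $\Dom$ combined with $P \subseteq R$ gives $\widetilde R \subseteq \Dom(P)$; and the universal containment $P \subseteq \Dom(S)$, valid for every tuple $S$ (since $d(p, P_k) = 0$ for every $p \in P_k$), gives $P \subseteq \widetilde R$. Hence $\widetilde R$ is sandwiched, $P \subseteq \widetilde R \subseteq \Dom(P)$, in precisely the way each iterate $\Dom^{\gamma - 1}(P)$ is for $\gamma \geq 2$. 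Because the proof of Lemma \ref{lem:ImportedLem}\eqref{item:Dom3Disjoint} in \cite{ReemZoneCompute} uses only this sandwich property of the input to the outermost $\Dom$, the very same geodesic midpoint argument yields $r_k/8 + r_j/8 \leq d(\Dom(\widetilde R)_k, \Dom(\widetilde R)_j) = d(\Dom^2(R)_k, \Dom^2(R)_j)$; combined with $R \subseteq \Dom^2(R)$, this gives $d(R_k, R_j) \geq d(\Dom^2(R)_k, \Dom^2(R)_j) \geq r_k/8 + r_j/8$, as required.

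The main obstacle in (c) is justifying that the proof of Lemma \ref{lem:ImportedLem}\eqref{item:Dom3Disjoint} in \cite{ReemZoneCompute} really depends only on the sandwich $P \subseteq \widetilde R \subseteq \Dom(P)$ of its input and not on some additional structure specific to the iterates $\Dom^{\gamma - 1}(P)$. Should the cited proof be phrased inductively in $\gamma$, I would reorganise it into a single-step statement accepting any sandwiched $\widetilde R$, or else reproduce the geodesic midpoint estimate from scratch for $\widetilde R$; since that estimate is intrinsically one-shot, this is bookkeeping rather than a genuinely new idea.
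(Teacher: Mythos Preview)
Your arguments for parts (a) and (b) match the paper's essentially verbatim.

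For part (c) you take a different route, and it is worth comparing. You set $\widetilde R=\Dom(R)$, observe the sandwich $P\subseteq\widetilde R\subseteq\Dom(P)$, and then want to feed $\widetilde R$ into the \emph{proof} of Lemma~\ref{lem:ImportedLem}\eqref{item:Dom3Disjoint} rather than its statement. As you yourself note, this obliges you to open the cited argument and check (or reprove) that it uses nothing beyond the sandwich; the proof is then only conditionally complete. The paper avoids this entirely by instead applying the \emph{statement} of Lemma~\ref{lem:ImportedLem}\eqref{item:Dom3Disjoint} as a black box with $\gamma=3$. The key observation you are missing is that one can show directly
\[
R\ \subseteq\ \Dom^2(R)\ \subseteq\ \Dom^4(R)\ \subseteq\ \Dom^4\bigl((X)_{k\in K}\bigr)\ =\ \Dom^3(\overline{P})\ =\ \Dom^3(P),
\]
using monotonicity of $\Dom^2$, the trivial inclusion $R\subseteq (X)_{k\in K}$, the identity $\Dom\bigl((X)_{k\in K}\bigr)=\overline{P}$, and Lemma~\ref{lem:ImportedLem}\eqref{item:Closure}. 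Once $R\subseteq\Dom^3(P)$ is in hand, the bound $d(R_k,R_j)\geq d\bigl((\Dom^3 P)_k,(\Dom^3 P)_j\bigr)\geq r_k/8+r_j/8$ follows immediately from the lemma as stated. This route is shorter and removes the obstacle you flagged; your approach would also work once the cited proof is unpacked, but the paper's trick of passing through $\Dom^4\bigl((X)_{k\in K}\bigr)$ makes that unnecessary.
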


\begin{proof}
\begin{enumerate}[(a)]
\item Suppose by way of contradiction that $x\in R_k\bigcap R_j$ for some $j,k\in K$, $j\neq k$. 
Since $x\in R_k\subseteq (\Dom R)_k$ we have  
$d(x,P_k)\leq d(x,\bigcup_{i\neq k}R_i)\leq d(x,R_j)=0$, so $x\in \overline{P_k}$. In the same 
way $x\in \overline{P_j}$, a contradiction.

\item Let $j,k\in K$, $j\neq k$ and 
$x\in R_k\subseteq\dom(P_k,\bigcup_{i\neq k}R_i)$, $y\in R_j\subseteq\dom(P_j,\bigcup_{i\neq j}R_i)$.
This implies that $d(x,P_k)\leq d(x,R_j)\leq d(x,y)$ and $d(y,P_j)\leq d(x,y)$. Therefore  
\begin{equation*}
r_k\leq d(P_k,P_j)\leq d(P_k,x)+d(x,y)+d(y,P_j)\leq 3d(x,y).
\end{equation*}
 Thus $r_k/3\leq d(R_k,R_j)$. Similarly, $r_j/3\leq d(R_k,R_j)$.
\item From the monotonicity of $\Dom^2$ (Lemma \ref{lem:ImportedLem}\eqref{item:Monotone}) we have  $R\subseteq\Dom^2(R)\subseteq\Dom^4(R)$.
This,  Lemma \ref{lem:ImportedLem} parts \eqref{item:Monotone}-\eqref{item:Closure},  
the inclusion $P\subseteq R\subseteq (X)_{k\in K}$, and $\overline{P}=\Dom(X)_{k\in K}$ imply that 
$R \subseteq \Dom^4(X)_{k\in K}=\Dom^3(\overline{P})=\Dom^3(P)$. 
From Lemma \ref{lem:ImportedLem}\eqref{item:Dom3Disjoint} we conclude that 
\begin{equation*}
d(R_k,R_j)\geq d((\Dom^{3}P)_k,(\Dom^{3} P)_j)\geq (r_k/8)+(r_j/8)
\end{equation*}
for each $j,k\in K, k\neq j$.
\end{enumerate}
\end{proof}

\begin{lem}\label{lem:NeutralGeodesic}
Let $B=(B_k)_{k\in K}$ be a tuple of nonempty subsets in a geodesic metric space $(X,d)$ and 
suppose that 
\begin{equation}\label{eq:rho_kB_k}
\rho_k:=\inf\{d(B_k,B_j): j\in K,\,j\neq k\}>0 \quad \forall k\in K. 
\end{equation}
Then $N:=X\backslash\bigcup_{k\in K}B_k\neq \emptyset$. 
Moreover, $\bigcup_{k\in K} S_k\subseteq N$ where 
\begin{equation}\label{eq:N_k}
S_k=\{x\in X: d(x,B_k)< \rho_k,\,\, x\notin B_k\}. 
\end{equation}
\end{lem}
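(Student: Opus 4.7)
The plan is to reduce the whole statement to verifying that some $S_k$ is nonempty, since then the nonemptiness of $N$ follows automatically from the claimed inclusion $\bigcup_k S_k\subseteq N$. I would therefore first dispatch the inclusion by a short direct argument from the definitions, and then establish that $S_k\neq\emptyset$ (in fact for every $k$) by a continuity/intermediate-value argument along a geodesic. The whole role of the geodesic hypothesis is concentrated in this last step.

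For the inclusion $\bigcup_k S_k\subseteq N$: fix $k\in K$ and $x\in S_k$. By definition $x\notin B_k$, so I only need to rule out $x\in B_j$ for $j\neq k$. If $x\in B_j$ for some $j\neq k$, then $d(B_k,B_j)\leq d(x,B_k)<\rho_k$, contradicting the definition of $\rho_k$ as the infimum over $j\neq k$ of $d(B_k,B_j)$. Hence $x\notin\bigcup_{j\in K}B_j$, i.e., $x\in N$.

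For nonemptiness: fix any $k\in K$, pick $a\in B_k$, and, using $|K|\geq 2$ together with the nonemptiness of every $B_j$, pick some $j\neq k$ and $b\in B_j$. The hypothesis that $(X,d)$ is geodesic supplies an isometry $\gamma\colon[0,d(a,b)]\to X$ with $\gamma(0)=a$ and $\gamma(d(a,b))=b$. Define $f\colon[0,d(a,b)]\to[0,\infty)$ by $f(t)=d(\gamma(t),B_k)$. Since $|f(t)-f(s)|\leq d(\gamma(t),\gamma(s))=|t-s|$, the function $f$ is $1$-Lipschitz, hence continuous. Moreover $f(0)=0$ because $a\in B_k$, and
\[
f(d(a,b))=d(b,B_k)\geq d(B_j,B_k)\geq \rho_k>0.
\]
The intermediate value theorem provides $t^{*}\in(0,d(a,b))$ with $f(t^{*})=\rho_k/2$. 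Because $f(t^{*})>0$ we have $\gamma(t^{*})\notin\overline{B_k}$, and in particular $\gamma(t^{*})\notin B_k$; and because $f(t^{*})=\rho_k/2<\rho_k$, the point $\gamma(t^{*})$ lies in $S_k$. Thus $S_k\neq\emptyset$ and therefore $N\neq\emptyset$.

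The only genuine difficulty is the middle step, namely producing a point at distance strictly between $0$ and $\rho_k$ from $B_k$: this is exactly where the geodesic structure is indispensable, because it provides a continuous curve along which the distance to $B_k$ varies continuously from $0$ up past $\rho_k$. In the absence of geodesics one can only claim separation of the $B_k$'s without any mechanism to interpolate between them, and this is precisely what will break down in the non-geodesic counterexamples discussed later in the section.
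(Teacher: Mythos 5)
Your proof is correct, and the key step is carried out by a genuinely different (and slimmer) mechanism than the paper's. The paper also walks along a geodesic from a point of $B_k$ to a point of $B_j$, but it locates a neutral point by a ``last exit'' construction: it forms the set $E$ of parameters beyond which the geodesic no longer meets $\overline{B_k}$, takes $a=\inf E$, argues that $\gamma(a)\in\overline{B_k}$ by a small-ball/continuity argument, and then shows that points of $(\gamma(a),\gamma(a+s)]$ for $s<\rho_k$ lie outside every $B_i$, producing a point of $N$ directly; the inclusion $\bigcup_k S_k\subseteq N$ is then proved separately, exactly as in your first step. You instead route everything through $S_k$: the observation that $t\mapsto d(\gamma(t),B_k)$ is $1$-Lipschitz lets you invoke the intermediate value theorem to hit the level $\rho_k/2$, which immediately gives a point of $S_k$, and nonemptiness of $N$ falls out of the inclusion. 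This replaces the paper's delicate topological bookkeeping about $E$ and its infimum with a one-line continuity fact, and it isolates the role of the geodesic hypothesis exactly as you say; it even extends verbatim to arbitrary continuous paths between $B_k$ and $B_j$, which is relevant to the separation claim used later in Proposition~\ref{prop:Equilibrium}. The only mild caveat is your implicit use of $|K|\geq 2$ to find an index $j\neq k$; the paper makes the same implicit assumption (without it, $\rho_k$ would be an infimum over the empty set and the statement fails for $B_k=X$), so this is not a gap relative to the intended hypotheses.
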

\begin{proof}
Let $j,k\in K$, $j\neq k$ and let $x\in B_k$, $y\in B_j$. Since $X$ is a geodesic metric space there exists 
an isometry $\gamma:[0,d(x,y)]\to X$ satisfying $\gamma(0)=x$ and $\gamma(d(x,y))=y$. 
Let $E$ be the inverse image of the part of the segment $[x,y]$ which does not meet $\overline{B_k}$ anymore, i.e.,
\begin{equation*}
E:=\{t\in [0,d(x,y)]:\,\, [\gamma(s),y]\cap \overline{B_k}=\emptyset \quad\forall s\in [t,d(x,y)]\}.
\end{equation*}
Since $y\in B_j$ and $y\notin \overline{B_k}$ (by \eqref{eq:rho_kB_k}) it follows that $d(x,y)\in E$. 
Thus $E\neq \emptyset$. Let $a=\inf E$. If $a=0$, then $\gamma(a)=x\in \overline{B_k}$. Otherwise $a>0$. 
Assume by way of contradiction that  
 $\gamma(a)\notin \overline{B_k}$. Since $\overline{B_k}$ is closed it follows that a small ball 
 around $\gamma(a)$ does not intersect $\overline{B_k}$. Because $\gamma$ is continuous, for any $t$ in a small segment 
 around $a$ the point $\gamma(t)$ is inside this ball and thus does not belong to $\overline{B_k}$. This contradicts the minimality of $a$.  Therefore $\gamma(a)\in \overline{B_k}$. 
  
Consider the line segment $(\gamma(a),y]$. Its length is at least $\rho_k$ by 
\eqref{eq:rho_kB_k} (since the distance between two sets is the distance between their closures). 
Since $\gamma$ is an isometry the length of $[a,d(x,y)]$ is at least $\rho_k$. 
Let $s\in(0,\rho_k)$ and let $z=\gamma(a+s)$. Then $z\in (\gamma(a),y]$ and 
$d(z,\overline{B_k})\leq d(z,\gamma(a))=s<\rho_k$. From the definition of $a$ there exists $b\in (a,a+s)\cap E$. 
Thus $[\gamma(b),y]\cap \overline{B_k}=\emptyset$ and in particular $z\notin B_k$. 
From \eqref{eq:rho_kB_k} (with $i$ instead of $k$) it follows that $z\notin \bigcup_{i\neq k}B_i$. 
Therefore $z\in N$  and in particular $N\neq\emptyset$. 

Finally, let $S_k$ be the shell defined in \eqref{eq:N_k} and let $x\in S_k$. From \eqref{eq:rho_kB_k} we see that 
$x\notin B_j$ for $j\neq k, j\in K$. In addition, $x\notin B_k$ by the definition of $S_k$. 
Hence $x\in N$ and $S_k\subseteq N$ for each $k\in K$. 
\end{proof}

\begin{thm}\label{thm:NeutralZone}
Let $(X,d)$ be a geodesic metric space and let $(P_k)_{k\in K}$ be a tuple of nonempty subsets of $X$. 
Assume that \eqref{eq:r_k} holds. Let $R=(R_k)_{k\in K}$ satisfy $P_k\subseteq R_k\subseteq X$ for 
each $k\in K$ and suppose that either $R\subseteq\Dom(R)$ or $R\subseteq \Dom^2(R)$. 
Then there exists a neutral region in $X$, i.e., $N:=X\backslash\bigcup_{k\in K}R_k\neq \emptyset$. 
In particular this is true when $R$ is a zone or a double zone diagram. Moreover, let 
\begin{equation*}
\beta_k=\left\{
\begin{array}{ll}
r_k/3, &\quad \textnormal{if}\,\, R\subseteq \Dom(R),\\
(r_k+\inf\{r_j:\, j\in K,\, j\neq k\})/8,& \quad\textnormal{if}\, R\subseteq \Dom^2(R)
\end{array}
\right.
\end{equation*}
for each $k\in K$. Then $\bigcup_{k\in K} S_k\subseteq N$, where for each $k\in K$, 
\begin{equation}\label{eq:N_kR_k}
S_k=\{x\in X: d(x,R_k)< \beta_k,\,\, x\notin R_k\}. 
\end{equation}
\end{thm}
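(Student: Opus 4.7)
The plan is to reduce the theorem to an application of Lemma \ref{lem:NeutralGeodesic} with $B = R$, using Lemma \ref{lem:disjoint} to verify its hypothesis. The key point is that Lemma \ref{lem:disjoint} already guarantees the separation of the components of $R$ under either of the two assumptions $R \subseteq \Dom(R)$ or $R \subseteq \Dom^2(R)$, so the geodesic construction in Lemma \ref{lem:NeutralGeodesic} applies directly.

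More concretely, first I would set $\rho_k := \inf\{d(R_k,R_j): j \in K,\, j \neq k\}$ and check that $\rho_k \geq \beta_k > 0$ in each of the two cases. In the case $R \subseteq \Dom(R)$, Lemma \ref{lem:disjoint}\eqref{item:RDomR} gives $d(R_k,R_j) \geq \max\{r_k,r_j\}/3 \geq r_k/3 = \beta_k$ for all $j \neq k$, so $\rho_k \geq \beta_k$. In the case $R \subseteq \Dom^2(R)$, Lemma \ref{lem:disjoint}\eqref{item:RDom2R} gives $d(R_k,R_j) \geq r_k/8 + r_j/8$ for all $j \neq k$, and taking the infimum over $j \neq k$ yields $\rho_k \geq (r_k + \inf\{r_j : j \in K,\, j \neq k\})/8 = \beta_k$. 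In both cases the positivity of $r_k$ assumed in \eqref{eq:r_k} gives $\beta_k > 0$, hence condition \eqref{eq:rho_kB_k} of Lemma \ref{lem:NeutralGeodesic} is satisfied with $B = R$.

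Next, I would invoke Lemma \ref{lem:NeutralGeodesic} applied to $B = R$ to conclude that $N = X \setminus \bigcup_{k \in K} R_k \neq \emptyset$ and that the shells $\{x \in X : d(x,R_k) < \rho_k,\, x \notin R_k\}$ are contained in $N$. Since $\beta_k \leq \rho_k$, the set $S_k$ defined in \eqref{eq:N_kR_k} is contained in the corresponding shell from Lemma \ref{lem:NeutralGeodesic}, so $\bigcup_{k \in K} S_k \subseteq N$. The statements about zone and double zone diagrams are then immediate: if $R = \Dom(R)$ then in particular $R \subseteq \Dom(R)$, and if $R = \Dom^2(R)$ then $R \subseteq \Dom^2(R)$, so both cases are covered (and, as observed in Remark \ref{rem:Z-DZ-T-DT}, the components $P_k \subseteq R_k$ automatically).

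There is no real obstacle here; the work was already done in Lemmas \ref{lem:disjoint} and \ref{lem:NeutralGeodesic}. The only point requiring a bit of care is the bookkeeping that matches the two lower bounds coming from Lemma \ref{lem:disjoint} with the two cases of $\beta_k$ stated in the theorem, and the verification that $\beta_k \leq \rho_k$ so that the shell inclusion $S_k \subseteq N$ really follows from the one furnished by Lemma \ref{lem:NeutralGeodesic}.
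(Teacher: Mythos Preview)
Your proposal is correct and follows exactly the same approach as the paper's proof, which is a one-sentence reduction to Lemma~\ref{lem:NeutralGeodesic} with $B=R$ after noting that \eqref{eq:rho_kB_k} is guaranteed by Lemma~\ref{lem:disjoint}\eqref{item:RDomR}--\eqref{item:RDom2R}. Your write-up simply makes explicit the bookkeeping (matching $\beta_k$ with the lower bounds on $\rho_k$ in each case and checking $\beta_k>0$) that the paper leaves to the reader.
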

\begin{proof}
This is a simple consequence of Lemma \ref{lem:NeutralGeodesic} with $B=R$ since \eqref{eq:rho_kB_k} 
is satisfied by Lemma \ref{lem:disjoint}\eqref{item:RDomR}-\eqref{item:RDom2R}. 
\end{proof}

\begin{figure}
\begin{minipage}[t]{0.45\textwidth}
\begin{center}
{\includegraphics[scale=0.52]{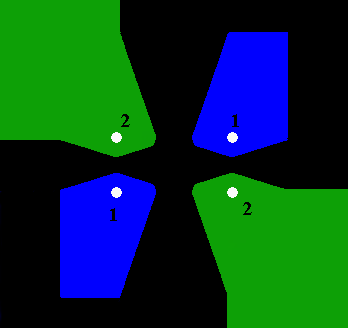}}
\end{center}
 \caption{The neutral region induced by a zone diagram of two sites, each 
 consists of 2 points, in a square in $(\R^2,\ell_1)$ (Example ~\ref{ex:ZDT}).}
\label{fig:ZD-4iterations-2sites-2inSite-0002-New}
\end{minipage}
\hfill
\begin{minipage}[t]{0.45\textwidth}
\begin{center}
{\includegraphics[scale=0.52]{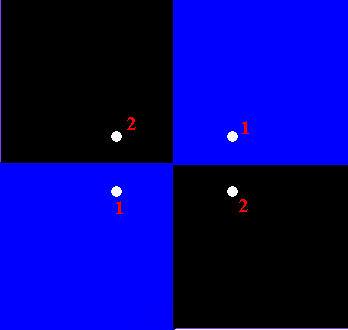}}
\end{center}
 \caption{The neutral region induced by a territory diagram $R$ of the setting of Example ~\ref{ex:ZDT}. 
 The second component of $R$ is $P_2$ and $R$ is not a double territory diagram.}
\label{fig:TD-2sites-2inSIte-0002}
\end{minipage}
\hfill
\begin{minipage}[t]{0.45\textwidth}
\begin{center}
{\includegraphics[scale=0.52]{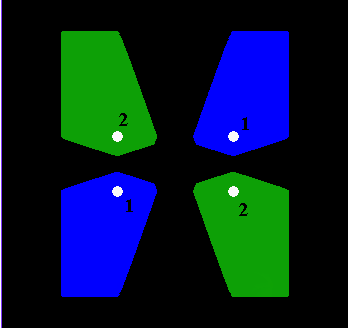}}
\end{center}
 \caption{The neutral region induced by the least double zone diagram $R$ 
 of the setting of Example~\ref{ex:ZDT}. $R$ is not a zone diagram.}
\label{fig:DZ-Min-0002-2sites2inSite-l1}
\end{minipage}
\hfill
\begin{minipage}[t]{0.45\textwidth}
\begin{center}
{\includegraphics[scale=0.52]{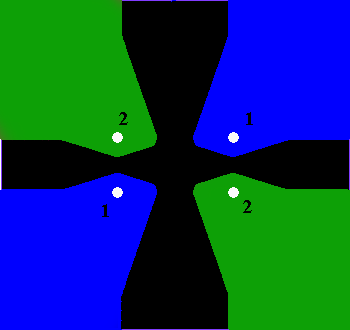}}
\end{center}
 \caption{The neutral region induced by the greatest double zone diagram $R$ of the setting of 
 Example~ \ref{ex:ZDT}. $R$ is a double territory diagram which is not a territory diagram.}
\label{fig:DZ-Max-0002-2sites2inSite-l1}
\end{minipage}
\end{figure}

\begin{expl}\label{ex:ZDT}
An illustration of Theorem \ref{thm:NeutralZone} is given in Figures 
\ref{fig:ZD-4iterations-2sites-2inSite-0002-New}-\ref{fig:DZ-Max-0002-2sites2inSite-l1} which 
also show some of the difference between the various notions. 
In all of these figures the setting is $X=[-6,6]^2$, $P_1=\{(2,1), (-2,-1)\}$,  $P_2=\{(-2,1), (2,-1)\}$, 
and the distance is the 2-dimensional $\ell_1$ distance. 
The (black) neutral region is clearly seen. Figures \ref{fig:ZD-4iterations-2sites-2inSite-0002-New}, 
\ref{fig:DZ-Min-0002-2sites2inSite-l1}, 
and \ref{fig:DZ-Max-0002-2sites2inSite-l1} were produced 
using the method described in \cite{ReemZoneCompute} and Figure \ref{fig:TD-2sites-2inSIte-0002} was produced directly. 
\end{expl}

\begin{expl}\label{ex:NeutralLineSegment}
From the proof of Lemma \ref{lem:NeutralGeodesic} and Theorem \ref{thm:NeutralZone} 
one obtains points in the neutral region by looking at certain parts of 
line segments connecting points located in different sites. 
This example shows that sometimes the neutral zone is nothing more than such a segment.  
In particular this example shows that the shells $N_k$  located 
around  the components of the (double) territory diagram (see \eqref{eq:N_kR_k}) 
can be very small. (Compare to the discussion in Section \ref{sec:ConcentrationOfMeasure}.)

Indeed, let $X_1=\{0\}\times (-2,3]$, $X_2=\{x\in \R^2: \|x-(0,-3)\|\leq 1\}$, 
and $X=X_1\cup X_2$, where $\|\cdot\|$ is the Euclidean norm. Define a metric $d$ 
on $X$ by $d(x,y)=\|x-y\|$ if $x$ and $y$ belong to the same component $X_i,\,i=1,2$, 
and $d(x,y)=\|x-(0,-2)\|+\|y-(0,-2)\|$ otherwise. Then $(X,d)$ is a geodesic metric 
space. Now let $P_1=\{(0,3)\}$, $P_2=\{(0,-3)\}$, $R_1=\{0\}\times [1,3]$, and 
$R_2=X_2\cup (\{0\}\times (-2,-1])$. Then $R=(R_1,R_2)$ is a zone diagram 
with respect to $P=(P_1,P_2)$ and the neutral region is $\{0\}\times (-1,1)$. 
See Figure \ref{fig:NeutralLineSegment}. 
\end{expl}
\begin{figure}
\begin{minipage}[t]{1\textwidth}
\begin{center}
{\includegraphics[scale=0.42]{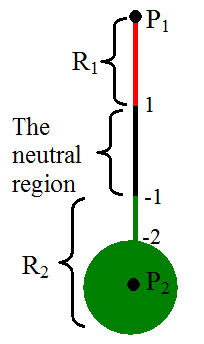}}
\end{center}
 \caption{The neutral region described in Example~ \ref{ex:NeutralLineSegment}. }
\label{fig:NeutralLineSegment}
\end{minipage}
\end{figure}

\begin{expl}\label{ex:NoNeutralZone}
Let $X=\{-1,0,1\}$ be a subset of $\R$ with the standard absolute value metric. 
Let $P_1=\{-1\}$, $P_2=\{1\}$. Let $R_1=P_1$, $R_2=\{0,1\}$. Then $R=(R_1,R_2)$ 
is a zone diagram (and hence also a territory diagram) but $R_1 \cup R_2=X$, 
violating Theorem~ \ref{thm:NeutralZone}. 
This is not surprising since $X$ is not a geodesic metric space. 
However, $R_1\cap R_2=\emptyset$, as predicted by Lemma \ref{lem:disjoint}\eqref{item:P_kR_kDisjoint}. 
This setting was mentioned in a different context in \cite[Example 2.3]{ReemReichZone}. 

In the same way, if $S_1=\{-1,0\}$ and $S_2=\{0,1\}$, then $S=(S_1,S_2)$ is 
a double zone diagram as a simple check shows (starting with observing that $\Dom(S)=(P_1,P_2)$). 
Now not only $S_1\cup S_2=X$, but also $S_1\cap S_2\neq \emptyset$. 
\end{expl}
\begin{expl}\label{ex:R=P}
Condition \eqref{eq:r_k} is necessary. Indeed, let $X=\R$ with the standard absolute 
value metric $d(x,y)=|x-y|$, let $K=X$, and let $P_k=k$, $k\in K$. Let $R=(P_k)_{k\in K}$. 
Then $(X,d)$ is a geodesic metric space, $R=\Dom(R)$, but $X\backslash(\cup_{k\in K}R_k)=\emptyset$. 
\end{expl}

\section{\bf Justifying the equilibrium interpretation of zone diagram}\label{sec:Interpretation}
One of the interpretations of zone diagrams, first suggested in \cite{AMTn} and then extended in 
\cite{ReemReichZone}, is a a certain equilibrium between mutually hostile kingdoms competing 
over territory. Kingdom number $k$ has a territory $R_k$ which has to be defended against attacks from the 
other kingdoms. Its site $P_k$ is interpreted as a castle, or, more generally, as a collection 
of army camps, castles, cities, and so forth. 
The sites remain unchanged and they are assumed to be located inside the kingdom and hence separated 
from each other. Due to various considerations (resources, field conditions, etc.), the defending 
army is located only in (part of) the corresponding site (unless the kingdom moves forces to attack another kingdom). 
 
 Assuming the time to move armed forces between two points is  proportional to the distance between 
 the points, it seems intuitively clear that if $R=(R_k)_{k\in K}$ is a zone diagram, then each point in each kingdom can be defended  at least as fast as it takes to attack it from any other kingdom, and no kingdom can enlarge its territory without  violating this condition. It also seems clear that the various territories are separated 
 by a no-man's land: the neutral territory. This was said explicitly in  \cite[p. 1183]{AMTn} where 
  the setting was the Euclidean plane and each site was a point. In \cite{ReemReichZone} the setting was 
 general and it was noted that counterexamples may exist in a discrete setting, but no further 
 investigation of the whole interpretation has been carried out. 
 
The goal of this section is to give a more rigorous justification to the above interpretation. It turns 
out that when the setting is similar to that of Theorem~ \ref{thm:NeutralZone}, then the interpretation holds. 

\begin{prop}\label{prop:Equilibrium}
Let $(X,d)$ be a geodesic metric space and let $P=(P_k)_{k\in K}$ be a tuple of nonempty subsets of $X$. 
Assume that \eqref{eq:r_k} holds. Suppose that $R=(R_k)_{k\in K}$ is a zone diagram corresponding 
to $P$. Then $R$ is an equilibrium in the above mentioned sense and there exists a neutral 
region separating its components.
\end{prop}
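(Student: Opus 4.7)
The plan is to formalize the intuitive equilibrium interpretation and then deduce both assertions by essentially unwinding the fixed point equation defining a zone diagram, together with Theorem~\ref{thm:NeutralZone} for the neutral region.

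First, I would make the interpretation precise. For $x\in R_k$, define the \emph{defense time} at $x$ to be $d(x,P_k)$ (the time required for the army stationed at the site $P_k$ to reach $x$), and for each $j\neq k$ define the \emph{attack time on $x$ from kingdom $j$} to be $d(x,R_j)$ (the time required for forces that may be massed anywhere in the hostile territory $R_j$ to reach $x$). The equilibrium assertion then becomes the conjunction of two conditions: (i) for every $k\in K$ and every $x\in R_k$, $d(x,P_k)\leq d(x,R_j)$ for all $j\neq k$; and (ii) no component $R_k$ can be replaced by a strictly larger $R_k'\supsetneq R_k$ (with the other components left unchanged) without violating (i).

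Condition (i) is immediate from the definition of a zone diagram: since $R_k=\dom\bigl(P_k,\bigcup_{j\neq k}R_j\bigr)$, for every $x\in R_k$ one has
\begin{equation*}
d(x,P_k)\leq d\Bigl(x,\bigcup_{j\neq k}R_j\Bigr)=\inf_{j\neq k}d(x,R_j)\leq d(x,R_j)\quad\forall j\neq k.
\end{equation*}
Condition (ii) follows by contraposition: if $R_k'\supsetneq R_k$, pick $x\in R_k'\setminus R_k$. Since $x\notin \dom(P_k,\bigcup_{j\neq k}R_j)$, one has $d(x,P_k)>\inf_{j\neq k}d(x,R_j)$, so some $j\neq k$ gives $d(x,R_j)<d(x,P_k)$, contradicting (i) for the enlarged tuple.

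For the final claim about the neutral region, I would simply observe that every zone diagram satisfies $R=\Dom(R)$, hence a fortiori $R\subseteq\Dom(R)$, so Theorem~\ref{thm:NeutralZone} applies and yields a nonempty neutral region $N=X\setminus\bigcup_{k\in K}R_k$ (and, in fact, identifies explicit shells $S_k$ contained in it). There is no serious obstacle here: once the verbal interpretation is rewritten as the two formal conditions (i)--(ii), both reduce to the fixed point equation itself, and the separating neutral region is a direct citation of the preceding theorem. The only subtlety worth flagging is the choice of formalization (in particular, that the attack time uses $d(x,R_j)$ rather than $d(x,P_j)$), which is what makes the zone diagram equation the \emph{exact} mathematical translation of the equilibrium condition.
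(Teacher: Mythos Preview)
Your argument is correct under your formalization of condition~(ii), but the paper formalizes ``no kingdom can enlarge its territory'' more strongly, and that stronger version is where the real work lies. In the paper, an enlargement of kingdom~$k$ by a set $A_k$ (disjoint from $R_k$ and from the enemy sites $\bigcup_{j\neq k}P_j$) simultaneously \emph{removes} $A_k$ from the rival territories: one sets $\wt{R_k}=R_k\cup A_k$ and $\wt{R_j}=R_j\setminus A_k$ for $j\neq k$, and then shows there exists $x\in\wt{R_k}$ with $d(x,P_k)>d(x,\bigcup_{j\neq k}\wt{R_j})$. Your contraposition ``$x\notin R_k=\dom(P_k,\bigcup_{j\neq k}R_j)$ hence some $d(x,R_j)<d(x,P_k)$'' does not settle this, because the relevant comparison is with the \emph{shrunken} $\wt{R_j}$, and $d(x,\wt{R_j})$ may well exceed $d(x,R_j)$ once $A_k$ has been excised.

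To close this gap the paper genuinely uses the geodesic hypothesis and \eqref{eq:r_k}: assuming \eqref{eq:x_in_domP_k} for some $x\in A_k$, one first rules out $x\in R_j$ for $j\neq k$ by exploiting that $B(P_k,r_k/4)\subseteq R_k$ and intersecting a geodesic from $P_k$ to $x$ with the boundary of this ball, which forces $d(x,P_k)>d(x,P_j)\geq d(x,\wt{R_j})$, a contradiction. Only after this does one know $A_k\cap R_j=\emptyset$, whence $\wt{R_j}=R_j$ and your easy argument handles the remaining case $x\in N$. In short, your route is a genuine simplification if one accepts the weaker (static-opponents) notion of equilibrium; but the paper's intended notion---captured territory is lost by the enemy---is strictly stronger, and proving it is precisely where the geodesic structure enters for the equilibrium claim, not merely for the existence of the neutral region.
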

\begin{proof}
The existence of a neutral region was proved in Theorem \ref{thm:NeutralZone}. The proof actually 
shows that this region separates the regions $R_k,\,k\in K$ in the sense that any 
path connecting two points located in different components goes via the neutral region. 

As for the equilibrium interpretation, let $x$ be a point in some region $R_k$. 
By definition, $d(x,P_k)\leq d(x,\bigcup_{j\neq k} R_j)$. Since the time to move armed forces 
between any two points is  proportional to the distance between them, this shows that armed 
forces originating at $P_k$ will arrive to $x$ before any armed forces originating from 
another kingdom will arrive to $x$. This last fact is true in general, even in $m$-spaces \cite{ReemReichZone} 
(in which the distance function can be negative and does not necessarily satisfy the triangle inequality) 
and even if the sites are not mutually disjoint, although in this general case 
the interpretation looses something from its intuitiveness. 

It remains to prove that  no kingdom can enlarge its territory without violating the fast defense 
condition. More precisely, given any index $k\in K$ and any nonempty subset $A_k\subset X$ satisfying 
\begin{equation}\label{eq:A_k}
A_k\bigcap R_k=\emptyset=A_k\bigcap (\bigcup_{j\neq k} P_j), 
\end{equation}
if we let $\wt{R_k}=R_k\bigcup A_k$ 
and $\wt{R_j}=R_j\backslash A_k$ for any $j\neq k$, then there exist points in $\wt{R_k}$ which 
 cannot be defended fast enough by armed forces emanating from $P_k$: there is 
 some kingdom $R_j$, $j\neq k$ which can send its forces to these points and they will arrive 
 there before the defending forces from $P_k$ will arrive. In other words, it is not true that 
$\wt{R_k}\subseteq\dom(P_k,\bigcup_{j\neq k}\wt{R_j})$. 

To prove this, let $x\in A_k$ be arbitrary. 
Suppose for a contradiction that 
\begin{equation}\label{eq:x_in_domP_k}
d(x,P_k)\leq d(x,\bigcup_{j\neq k}\wt{R_j}). 
\end{equation}
First, by \eqref{eq:A_k} it follows that $x\notin R_k$. It must be that $x\notin R_j$ for any $j\neq k$. 
Indeed, assume by way of negation that $x\in R_j$ for some $j\neq k$. In particular $d(x,P_j)\leq d(x,R_k)$ 
and by Lemma \ref{lem:disjoint} we also know that $x\notin R_k$. Now observe the simple fact that 
the neighborhood $B(P_k,r_k/4)=\{y\in X: d(y,P_k)<r_k/4\}$ is contained in $R_k$ (a proof can be found 
in \cite{ReemZoneCompute} and a related claim also in \cite[Observation 2.2]{KMT}) and let $p\in P_k$ 
satisfy $d(x,p)<d(x,P_k)+(r_k/16)$. The segment $[p,x]$ starts at a point in $B(P_k,r_k/4)$ 
 and ends at a point outside this neighborhood and therefore the intermediate value theorem implies 
 that it intersects the boundary of  $B(P_k,r_k/4)$. The point of intersection $y$ is of 
 distance at least $r_k/4$ from $p$, otherwise it will be strictly inside  $B(P_k,r_k/4)$. 
The discussion above implies that 
\begin{equation*}
(r_k/16)+d(x,P_k)>d(x,p)=d(x,y)+d(y,p)\geq d(x,R_k)+(r_k/4)
\end{equation*} 
and hence, recalling that $D(x,R_k)\geq d(x,P_j)$, we have  
\begin{equation*}
d(x,P_k)>d(x,R_k)+(3r_k/16)\geq d(x,P_j)+(3r_k/16)>d(x,P_j). 
\end{equation*}
But this is impossible since 
we assumed that $d(x,P_k)\leq d(x,\bigcup_{i\neq k}\wt{R_i})$ and from \eqref{eq:A_k} we know 
that $P_j\subseteq \wt{R_j}\subseteq \bigcup_{i\neq k}\wt{R_i}$. This contradiction proves that 
$x\notin R_j$ for any $j\neq k$ and hence $A_k\cap (\bigcup_{j\neq k}R_j)=\emptyset$.

Finally $x$ cannot be in the (original) neutral region $N=X\backslash (\bigcup_{j\in K}R_j)$. Indeed, 
if $x$ is there then in particular $x\notin R_k=\dom(P_k,\bigcup_{j\neq k}R_j)$, i.e., $d(x,R_j)<d(x,P_k)$ 
for some $j\neq k$. But $R_j=\wt{R_j}$ since $A_k\cap R_j=\emptyset$ as proved above. Thus 
$d(x,\wt{R_j})<d(x,P_k)$, a contradiction to \eqref{eq:x_in_domP_k}. 
 Thus $x\notin R_k\bigcup(\bigcup_{j\neq k} R_j)\bigcup N=X$, an obvious contradiction. 
 Consequently \eqref{eq:x_in_domP_k} does not hold, i.e., $d(x,P_k)>d(x,\bigcup_{j\neq k}\wt{R_j})$ as claimed.
\end{proof}

\begin{remark}
When the space is no geodesic anymore a kingdom can enlarge its territory without violating the 
fast defense condition: just consider for instance  Example \ref{ex:NoNeutralZone} where $X=\{-1,0,1\}$ 
(or, if we allow attacks on the sites, even the more simple example 
where $X=\{-1,1\}$, $P_1=R_1=\{-1\}$,  $P_2=R_2=\{1\}$). Here it is worthwhile to kingdom 1 to try to 
capture the point $0$. However, one can argue against this example that the armed forces must jump out 
of the space  in order to arrive to the other kingdoms and if they do manage to do this, then they seem to 
appear there  ``out of the blue''. Hence it is implicitly assumed in the original interpretation that 
the space is ``continuous'', or,  in more precise terms, that it is a geodesic metric space or even a convex 
subset of a normed space. 
\end{remark}

\section{\bf A certain phenomenon related to measure concentration}\label{sec:ConcentrationOfMeasure}
We end this paper by showing that under simple conditions not only the neutral region is nonempty, but actually 
it can be quite large. Roughly speaking, given a double zone diagram of 
separated sites contained in the interior of a closed and convex world in (the Euclidean) $\R^m$, 
when the dimension of the space grows the volume of the neutral region 
becomes much larger than the volume of the ``interior regions'' (see the next paragraph). 
Hence, if the attention is restricted to these regions and the neutral one (as in conditional probability), 
then the neutral region occupies most of the volume. This property is related to the 
phenomena called ``concentration of measure'' \cite[pp. 165-166]{Gruber2007},\cite[pp. 329-341]{Matousek2002}. 
However, as can be seen here and there, the two phenomena are distinguished: 
for example, the discussion there is restricted to the Euclidean unit sphere with the normalized surface measure, 
the volume concentrates in a subset of a concrete form (near the equilateral, or, more generally, 
near the inverse image of the median of a Lipschtiz function), and one has a somewhat different bound on the ratio between 
the volumes of the various subsets (but there are some similarities, e.g., the dependence on the dimension is 
exponential in both cases). 

In what follows we are going to use the following terminology and notation. 
For a measurable subset $M\subseteq \R^m$ we denote by $\vol(M)$ the volume (Lebesgue measure) 
of $M$. Given a double zone diagram $R=(R_k)_{k\in K}$ induced by a tuple $(P_k)_{k\in K}$ of point sites 
located in the interior of the world $X$, a component $R_k$ is said to be an interior region if its distance from the boundary 
of $X$ is larger than some given positive parameter. Otherwise, it is said to be a boundary region.  
(It may be interesting to note that although we consider double zone diagrams, 
as a matter of fact, any such an object coincides  
with the unique zone diagram; this fact was not proved formally anywhere, but at least in our 
 specific setting it follows from the proof of a related theorem, namely \cite[Theorem 2.1]{KMT}.) 

\subsection{Imported results}
For the proofs in the sequel we will make use of the following three results which are special 
cases of results proved in \cite{ReemGeometricStabilityArxiv}.  
See, for instance, Theorem 8.2, Lemma 8.6, and Theorem 9.6 in v2 of the arXiv version of \cite{ReemGeometricStabilityArxiv} 
and also \cite[Theorem 3]{ReemISVD09}. See also Remark \ref{rem:Density} regarding \eqref{eq:BallRho}. 

\begin{thm}\label{thm:domInterval}
Let $P$ and $A$ be nonempty subsets of $X$. 
Then $\dom(P,A)$ is a union of line segments starting at 
  the points of $P$. More precisely, given $p\in P$ and a unit vector $\theta$, let 
\begin{equation}\label{eq:Tdef}
T(\theta,p)=\sup\{t\in [0,\infty): p+t\theta\in X\,\,\mathrm{and}\,\,
 d(p+t\theta,p)\leq d(p+t\theta,A)\}.
\end{equation}
Then
\begin{equation*}
\dom(P,A)=\bigcup_{p\in P}\bigcup_{|\theta|=1}[p,p+T(\theta,p)\theta].
\end{equation*}
In particular, if $P$ is composed of one point $p$, then we denote 
$T(\theta)=T(\theta,p)$ and we have 
\begin{equation*}\label{eq:dom_p}
\dom(P,A)=\bigcup_{|\theta|=1}[p,p+T(\theta)\theta].
\end{equation*}
.
\end{thm}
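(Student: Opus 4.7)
My plan is to prove the two inclusions separately. For $\supseteq$, fix $p \in P$ and a unit vector $\theta$, and study the scalar function $\varphi(t) = d(p+t\theta, A) - t$ on $J := \{t \ge 0 : p + t\theta \in X\}$. Since $d(\cdot, A)$ is $1$-Lipschitz, $\varphi$ is continuous, so $E := \{t \in J : \varphi(t) \ge 0\}$ is closed in $J$. The key observation is that $E$ is downward closed: if $t \in E$ and $0 \le s \le t$, then convexity of $X$ (the ambient assumption of Section~\ref{sec:ConcentrationOfMeasure}) gives $p + s\theta \in X$, and the triangle inequality yields
\[
d(p+s\theta, A) \ge d(p+t\theta, A) - (t - s) \ge t - (t - s) = s,
\]
hence $s \in E$. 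Therefore $E$ is an interval $[0, T(\theta, p)]$, and for each such $t$ the point $q = p + t\theta$ satisfies $d(q, P) \le d(q, p) = t \le d(q, A)$, so $q \in \dom(P, A)$. This yields $[p, p + T(\theta, p)\theta] \subseteq \dom(P, A)$.

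For $\subseteq$, let $q \in \dom(P, A)$. If $q \in P$, take $p = q$ and any unit $\theta$, giving a degenerate segment of length $0$. Otherwise, I would produce some $p \in P$ satisfying $d(q, p) \le d(q, A)$; in the singleton case $P = \{p\}$, which is the only case actually invoked in Section~\ref{sec:ConcentrationOfMeasure}, this is the hypothesis $q \in \dom(P, A)$ itself. With such a $p$ in hand, set $\theta = (q - p)/|q - p|$ and $t = |q - p|$; the inequality $d(q, p) \le d(q, A)$ combined with $q \in X$ shows $t \in E$, hence $t \le T(\theta, p)$ and so $q \in [p, p + T(\theta, p)\theta]$.

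The main obstacle is the $\subseteq$ direction when $P$ is neither a singleton nor closed, since $d(q, P)$ may fail to be attained and then no individual $p \in P$ need satisfy $d(q, p) \le d(q, A)$ even though $d(q, P) \le d(q, A)$ does. For the applications of interest, in which each $P$ is a single point, this issue does not arise. For the fully general statement I would pass to the closure by noting $d(\cdot, P) = d(\cdot, \overline{P})$ and hence $\dom(\overline{P}, A) = \dom(P, A)$, producing an attaining $p \in \overline{P}$, and then invoke a continuity argument to relocate the segment inside $P$ (or simply interpret the union in the theorem as taken over $\overline{P}$).
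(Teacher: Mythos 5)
The paper does not actually prove this theorem: it appears in the subsection ``Imported results'' and is quoted from \cite{ReemGeometricStabilityArxiv} (and \cite{ReemISVD09}) without proof, so there is no in-paper argument to compare yours against. Judged on its own, your argument is correct and complete in the only case the paper ever invokes, namely $P=\{p\}$ a single point (as in Lemma \ref{lem:ConcentrationOfMeasureNeutral}, where $R_j=\dom(P_j,A_j)$ with $P_j=\{p_j\}$). The $\supseteq$ direction is clean: $0\in E$ because $d(p,A)\geq 0$, convexity of $X$ gives the downward closedness of $E$, the $1$-Lipschitz property of $d(\cdot,A)$ gives $d(p+s\theta,A)\geq d(p+t\theta,A)-(t-s)\geq s$, and closedness of $X$ makes $E$ a closed initial interval, so $E=[0,T(\theta,p)]$ when $T(\theta,p)<\infty$ (when $T(\theta,p)=\infty$ the ``segment'' is a ray, a degeneracy excluded in the application by Lemma \ref{lem:BallRho}). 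The $\subseteq$ direction is immediate once a point $p\in P$ with $d(q,p)\leq d(q,A)$ is available.

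Your worry about the $\subseteq$ direction for general non-closed $P$ is not merely a technicality: the statement as literally written is false then, and no ``relocation'' of the segment into $P$ can rescue it. Take $X=\R^2$, $P=\{(x,0):0<x\leq 1\}$, $A=\{(-2,0)\}$, $q=(-1,0)$. Then $d(q,P)=1=d(q,A)$, so $q\in\dom(P,A)$; but for $p=(x,0)$ with $x>0$ and $\theta=(-1,0)$ one computes $T(\theta,p)=1+x/2$, while reaching $q$ from $p$ requires $t=1+x>1+x/2$, so $q$ lies on no segment $[p,p+T(\theta,p)\theta]$ with $p\in P$. It does lie on the segment issued from $(0,0)\in\overline{P}$. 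So the correct general formulation takes the union over $\overline{P}$ (equivalently, assumes $P$ closed, whence in $\R^m$ the infimum $d(q,P)$ is attained), exactly as you propose; presumably the source \cite{ReemGeometricStabilityArxiv} carries such a hypothesis. For the purposes of this paper your proof of the singleton case suffices.
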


\begin{lem}\label{lem:BallRho}
Let $A$ be a nonempty subset of $X$. Let $p\in X$. Assume that  
\begin{equation}\label{eq:BallRho}
\exists \rho\in (0,\infty)\,\, \textnormal{such that}\,\,\forall x\in X\,\,\textnormal{the open ball}\,\, 
B(x,\rho)\,\,\textnormal{intersects}\,\,A.
\end{equation}
Then the mapping $T(\cdot)=T(\cdot,p)$ defined in Theorem \ref{thm:domInterval} satisfies 
$T(\theta)\in [0,\rho]$ for each unit vector $\theta$. 
\end{lem}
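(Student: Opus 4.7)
The proof plan is short because the statement is essentially an immediate consequence of the definition of $T$ combined with the density-type hypothesis \eqref{eq:BallRho}. I would proceed by direct contradiction on the upper bound and a one-line verification of the lower bound.

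First I would establish $T(\theta) \geq 0$, which is immediate: plugging $t = 0$ into \eqref{eq:Tdef} gives $p + 0 \cdot \theta = p \in X$ and $d(p,p) = 0 \leq d(p,A)$, so $0$ belongs to the set over which the supremum is taken, whence $T(\theta) \geq 0$.

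Next, for the upper bound $T(\theta) \leq \rho$, I would argue by contradiction. If $T(\theta) > \rho$, then by the definition of supremum there must exist some $t > \rho$ belonging to the set in \eqref{eq:Tdef}, meaning $p + t\theta \in X$ and
\begin{equation*}
d(p + t\theta,\, p) \leq d(p + t\theta,\, A).
\end{equation*}
Since $\theta$ is a unit vector, the left side equals $t$, so $d(p + t\theta, A) \geq t > \rho$. On the other hand, applying assumption \eqref{eq:BallRho} with $x := p + t\theta \in X$ yields a point of $A$ inside the open ball $B(x,\rho)$, and hence $d(p + t\theta, A) < \rho$. These two inequalities contradict each other, so $T(\theta) \leq \rho$ as desired.

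The main (and only) conceptual obstacle is making sure that $\sup S > \rho$ does force some element of $S$ to strictly exceed $\rho$, which is a standard property of suprema of subsets of $\mathbb{R}$; once this is noted the argument collapses to the one-line comparison between the lower bound $t$ coming from the definition of $T$ and the upper bound $\rho$ coming from \eqref{eq:BallRho}.
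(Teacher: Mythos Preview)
Your proof is correct. Note that the paper does not actually supply its own proof of this lemma: it is listed under the ``Imported results'' subsection and cited as a special case of a result proved elsewhere. Your direct argument---showing that $t=0$ lies in the defining set for the lower bound, and that any $t>\rho$ in that set would force both $d(p+t\theta,A)\geq d(p+t\theta,p)=t>\rho$ (from the defining condition in \eqref{eq:Tdef}) and $d(p+t\theta,A)<\rho$ (from hypothesis \eqref{eq:BallRho} applied at $x=p+t\theta\in X$)---is precisely the natural one and renders the statement self-contained.
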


\begin{thm}\label{thm:contT}
Let $A$ be a subset of $X$. Let $p\in X$ be in the 
interior of $X$. Suppose that  $d(p,A)>0$. Suppose that \eqref{eq:BallRho} holds. 
Then the mapping $T(\cdot)=T(\cdot,p)$ defined in Theorem \ref{thm:domInterval} is continuous. 
\end{thm}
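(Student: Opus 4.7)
My plan is to set $E(\theta) := \{t \in [0,\infty) : p + t\theta \in X \text{ and } d(p+t\theta, A) \geq t\}$, so that $T(\theta) = \sup E(\theta)$, with $E(\theta) \subseteq [0,\rho]$ by Lemma~\ref{lem:BallRho}. I would first establish the structural identity $E(\theta) = [0, T(\theta)]$. The set $E(\theta)$ is closed, since $X$ is closed and $t \mapsto d(p+t\theta, A) - t$ is $1$-Lipschitz hence continuous; in particular $T(\theta) \in E(\theta)$. For $s \in E(\theta)$ and $s' \in [0,s]$, the $1$-Lipschitz bound on $d(\cdot, A)$ gives
\begin{equation*}
d(p + s'\theta, A) \geq d(p + s\theta, A) - (s - s') \geq s - (s - s') = s',
\end{equation*}
while $p + s'\theta \in X$ because $p$ lies in the interior of the convex set $X$. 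Hence $[0, s] \subseteq E(\theta)$, proving the identity.

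For upper semicontinuity, let $\theta_n \to \theta$ and, after passing to a subsequence, assume $T(\theta_n) \to t^* := \limsup T(\theta_n) \in [0,\rho]$. Since $T(\theta_n) \in E(\theta_n)$, we have $p + T(\theta_n)\theta_n \in X$ and $d(p + T(\theta_n)\theta_n, A) \geq T(\theta_n)$; letting $n \to \infty$, the closedness of $X$ and continuity of $d(\cdot,A)$ yield $p + t^*\theta \in X$ and $d(p + t^*\theta, A) \geq t^*$, so $t^* \in E(\theta)$ and $t^* \leq T(\theta)$.

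For lower semicontinuity, fix arbitrary $s$ with $0 \leq s < T(\theta)$. By the structural identity there exists $s' \in (s, T(\theta)] \cap E(\theta)$, hence $d(p + s\theta, A) \geq s' > s$, with strict gap $\delta := d(p + s\theta, A) - s > 0$. Since $p + s\theta$ lies in the interior of $X$ (because $p$ is interior and $X$ is convex), continuity of $(\vartheta, t) \mapsto p + t\vartheta$ and of $d(\cdot, A)$ supplies an $N$ such that for all $n \geq N$ both $p + s\theta_n \in X$ and $|d(p + s\theta_n, A) - d(p + s\theta, A)| < \delta$; therefore $s \in E(\theta_n)$, whence $s \leq T(\theta_n)$ and $\liminf T(\theta_n) \geq s$. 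Letting $s \uparrow T(\theta)$ completes the argument. The main subtlety is producing the strict gap $\delta > 0$ that must survive a perturbation of $\theta$; this is precisely what the structural identity $E(\theta) = [0, T(\theta)]$ supplies (through the choice $s' > s$), and for which convexity of $X$ is essential in order to keep $p + s\theta_n$ inside $X$.
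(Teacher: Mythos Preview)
The paper does not supply its own proof of this statement---it is quoted as an imported result from \cite{ReemGeometricStabilityArxiv}---so there is nothing in the paper to compare your argument against directly. Your structural identity $E(\theta)=[0,T(\theta)]$ and the upper-semicontinuity step are fine. The lower-semicontinuity step, however, contains a genuine gap.

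You assert that from the existence of $s'\in(s,T(\theta)]\cap E(\theta)$ one obtains $d(p+s\theta,A)\geq s'$. This does not follow: the $1$-Lipschitz bound on $d(\cdot,A)$ only yields
\begin{equation*}
d(p+s\theta,A)\ \geq\ d(p+s'\theta,A)-(s'-s)\ \geq\ s'-(s'-s)\ =\ s,
\end{equation*}
which is precisely the non-strict inequality you already had from $s\in E(\theta)$. The ``structural identity'' by itself cannot manufacture the strict gap $\delta>0$. A tell-tale sign is that your argument never invokes the hypothesis $d(p,A)>0$; that hypothesis is exactly what is needed here. One clean repair in the Euclidean setting: the function
\begin{equation*}
H(t):=d(p+t\theta,A)^{2}-t^{2}=\inf_{a\in A}\Bigl(|p-a|^{2}+2t\,\langle\theta,\,p-a\rangle\Bigr)
\end{equation*}
is concave (an infimum of affine functions in $t$) and satisfies $H(0)=d(p,A)^{2}>0$. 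Concavity then forces $H(t)>0$ on $[0,T_f)$, where $T_f$ is its first zero (finite by \eqref{eq:BallRho}); since $E(\theta)\subseteq\{t\geq 0:H(t)\geq 0\}$ one has $T(\theta)\leq T_f$, and therefore $d(p+s\theta,A)>s$ for every $s<T(\theta)$. With this strict inequality in hand, the remainder of your lower-semicontinuity argument goes through unchanged.
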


\subsection{The results}
The main result of this section is Theorem \ref{thm:COM}. 
Its proof is based on the following two lemmas. 

\begin{lem}\label{lem:SphericalTransformation}
Let $S^{m-1}$ be the unit sphere of $\R^m$, $m\geq 2$.  Let $f:S^{m-1}\to [0,\infty)$ be continuous. 
Let $V$ be the region defined by 
\begin{equation}\label{eq:V}
V=\{p+t\theta: \theta\in S^{m-1}, t\in[0,f(\theta)]\}.
\end{equation} 
Then $V$ is measurable and $\vol(V)=(1/m)\int_{S^{m-1}}({f(\theta)})^{m}d\theta$. 
\end{lem}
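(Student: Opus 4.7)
The plan is to compute the volume of $V$ by recognizing it as the image of a natural set under the spherical-coordinate parametrization centered at $p$, and then invoking the standard change-of-variables formula.

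First I would establish measurability by a direct compactness argument. Define
\begin{equation*}
K = \{(\theta, t) \in S^{m-1} \times [0, \infty) : 0 \le t \le f(\theta)\}.
\end{equation*}
Since $f$ is continuous and $S^{m-1}$ is compact, $K$ is a closed and bounded subset of $S^{m-1} \times \R$, hence compact. The map $\Phi : S^{m-1} \times [0,\infty) \to \R^m$ defined by $\Phi(\theta, t) = p + t\theta$ is continuous, so $V = \Phi(K)$ is a compact subset of $\R^m$, and in particular Lebesgue measurable.

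Next I would compute the volume. Restrict $\Phi$ to $S^{m-1} \times (0, \infty)$; there it is a smooth bijection onto $\R^m \setminus \{p\}$ whose inverse is the standard spherical-coordinate map $x \mapsto (|x - p|, (x - p)/|x - p|)$. The well-known Jacobian computation for spherical coordinates (or, equivalently, the disintegration of Lebesgue measure on $\R^m$ into the product of radial measure $t^{m-1} dt$ on $(0,\infty)$ and the standard surface measure $d\theta$ on $S^{m-1}$) gives
\begin{equation*}
dx = t^{m-1}\, dt\, d\theta.
\end{equation*}
The origin $t = 0$ corresponds to the single point $p$, a Lebesgue null set, so it does not affect the integral. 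Since points of $V$ are uniquely represented as $p + t\theta$ whenever $t > 0$, no multiplicity issues arise.

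Applying Fubini's theorem and the change of variables to the indicator $\mathbf{1}_V$,
\begin{equation*}
\vol(V) = \int_{V} dx = \int_{S^{m-1}} \int_0^{f(\theta)} t^{m-1}\, dt\, d\theta = \int_{S^{m-1}} \frac{f(\theta)^{m}}{m}\, d\theta = \frac{1}{m}\int_{S^{m-1}} f(\theta)^{m}\, d\theta,
\end{equation*}
as claimed. The only technical point that requires a moment's care is ensuring the change of variables is legitimate, which reduces to observing that the singular fiber of $\Phi$ over $p$ is a null set and that $\Phi$ is a $C^\infty$ diffeomorphism off of it; everything else is bookkeeping.
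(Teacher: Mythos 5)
Your proof is correct and follows essentially the same route as the paper: compactness of the graph region under the continuous map $(\theta,t)\mapsto p+t\theta$ gives measurability, and the polar-coordinate change of variables $dx=t^{m-1}\,dt\,d\theta$ together with Fubini gives the volume formula. The only cosmetic difference is that the paper writes out the explicit angular parametrization of $S^{m-1}$ by a rectangle and its Jacobian factor, whereas you invoke the polar disintegration of Lebesgue measure directly; both are the same computation.
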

\begin{proof}
Let $F:[0,2\pi)\times [0,\pi]^{m-2}\to S^{m-1}$ be the spherical transformation mapping in a one-to-one 
way the rectangle $L_{m-1}=[0,2\pi)\times [0,\pi]^{m-2}$ onto the unit sphere. 
Let $G:[0,\infty)\times L_{m-1}\to \R^m$ be the  transformation of (translated) spherical coordinates 
defined by $G(r,\alpha)=p+rF(\alpha)$. More precisely,  $(x_1,\ldots,x_n)=G(r,\alpha_1,\ldots,\alpha_{n-1})$,
 where 
\begin{equation*}
\begin{array}{lll}
x_n&=&p_n+r\cos(\alpha_{n-1}), \\
x_{n-1}&=&p_{n-1}+r\sin(\alpha_{n-1})\cos(\alpha_{n-2}),\\
&\vdots & \\
x_3&=&p_3+r\sin(\alpha_{n-1})\ldots\sin(\alpha_3)\cos(\alpha_{2}),\\
x_1&=&p_1+r\sin(\alpha_{n-1})\sin(\alpha_{n-2})\ldots\sin(\alpha_2)\cos(\alpha_{1}), \\ 
x_2&=&p_2+r\sin(\alpha_{n-1})\sin(\alpha_{n-2})\ldots\sin(\alpha_2)\sin(\alpha_{1}),
\end{array} 
\end{equation*} 
and $p=(p_1,\ldots,p_n)$. The compactness of $S^{m-1}$ and the continuity of $f$ imply that $V$ is compact and 
hence measurable. We can write $V=G(W)$ where $W=\{(r,\alpha): r\in [0,f(F(\alpha))], \,\alpha\in L_{m-1}\}$. 
 The absolute value of the Jacobian of the smooth map $G$ is $|J|=r^{m-1}\Phi(\alpha)$ 
for some nonnegative and continuous function $\Phi:L_{m-1}\to \R$. This function is nothing but 
the change of variable factor (``Jacobian'') between the rectangle $L_{m-1}$ and $S^{m-1}$. In other words, 
it satisfies $d\theta=\Phi(\alpha)d\alpha$, namely $\int_{S^{m-1}} u(\theta)d\theta=\int_{L_{m-1}}u(F(\alpha))\Phi(\alpha)d\alpha$ for any continuous  function $u:S^{m-1}\to\R$ (this follows from the discussion on spherical coordinates in \cite[pp. 243-245]{Kuttler2009}). 
Thus, the change of variable formula and Fubini's theorem imply that 
\begin{multline*}
\vol(V)=\int_{V}dv=\int_{W}|J|(w)dw=\int_{L_{m-1}}\Phi(\alpha)\left(\int_0^{f(F(\alpha))}r^{m-1}dr\right)d\alpha\\
=\frac{1}{m}\int_{S^{m-1}}({f(\theta)})^{m}d\theta.
\end{multline*}
\end{proof}

\begin{lem}\label{lem:ConcentrationOfMeasureNeutral}
Let $X$ be a nonempty closed and convex subset of $\R^m$, $m\geq 2$, and let $(P_{k})_{k\in K}$ be 
 a tuple of point sites contained in the interior of $X$. 
 Suppose that $R=(R_{k})_{k\in K}$ is a double zone diagram 
corresponding to the sites. Let $N=X\backslash(\cup_{k\in K}R_{k})$ be the neutral 
region, the existence of which is guaranteed by Theorem ~\ref{thm:NeutralZone}. 
Suppose that for some $j\in K$ the region $R_j$ of the site $P_j$ is an interior 
region, namely there exists $\omega_j>0$ such that 
\begin{equation}\label{eq:omega_j}
\omega_j\leq d(R_j,\partial X).
\end{equation}
Assume that \eqref{eq:r_k} is satisfied (here $j$ and $k$ should be interchanged). Assume also that 
\eqref{eq:BallRho} holds with $A=\cup_{k\neq j}P_k$ and a positive number $\rho_j$.
  Then there exists a measurable subset $N_{j}\subseteq N$ satisfying  
\begin{equation}\label{eq:cm-1}
\vol(N_{j})\geq (c_j^m-1)\vol(R_{j})
\end{equation}
where $c_j=1+\min\{r_j/(32\rho_j),\omega_j/\rho_j\}$. Moreover, if for some  
$i\in K, i\neq j$ the region  $R_{i}$ is an interior region (with some parameter $\omega_i>0$) 
and \eqref{eq:r_k} and \eqref{eq:BallRho} hold for this $i$ (with $A=\cup_{k\neq i}P_k$ and $\rho=\rho_i$), 
then $N_{i}\cap N_{j}=\emptyset$ for the corresponding subset $N_{i}\subseteq N$. 
\end{lem}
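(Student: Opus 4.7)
The plan is to write $R_j$ in polar coordinates centered at $P_j$, enlarge it radially by a factor $c_j$, and show that the resulting shell lies inside $N$; the volume bound will then fall out of the spherical integration formula in Lemma \ref{lem:SphericalTransformation}. Concretely, since $R$ is a double zone diagram we have $R_j = \dom(P_j, A_j)$ with $A_j := \bigcup_{k \neq j}(\Dom R)_k$, so Theorem \ref{thm:domInterval} supplies a radial function $T_j : S^{m-1} \to [0,\infty)$ satisfying $R_j = \bigcup_{\|\theta\|=1}[P_j,\, P_j + T_j(\theta)\theta]$. I will need two properties of $T_j$. First, Lemma \ref{lem:BallRho} applied with $A = \bigcup_{k\neq j} P_k$ (a subset of $A_j$, because $(\Dom R)_k \supseteq P_k$) yields $T_j(\theta) \leq \rho_j$. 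Second, Theorem \ref{thm:contT} gives continuity of $T_j$, as $P_j$ lies in the interior of $X$ and $d(P_j, A_j) \geq r_j/8 > 0$; the latter bound holds because $\Dom R$ is itself a fixed point of $\Dom^2$ (since $\Dom^2(\Dom R) = \Dom(\Dom^2 R) = \Dom R$), so Lemma \ref{lem:disjoint}\eqref{item:RDom2R} gives $d((\Dom R)_j, (\Dom R)_k) \geq (r_j + r_k)/8$, and $P_j \subseteq (\Dom R)_j$.

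I will then set $N_j := \{P_j + t\theta : \theta \in S^{m-1},\ T_j(\theta) < t \leq c_j T_j(\theta)\}$ and verify $N_j \subseteq N$ in three steps. For $N_j \subseteq X$: the point $y := P_j + T_j(\theta)\theta$ lies in $R_j$, and so by \eqref{eq:omega_j} the ball $B(y,\omega_j)$ is contained in $X$; since the choice of $c_j$ gives $(c_j - 1)T_j(\theta) \leq (c_j - 1)\rho_j = \min\{r_j/32,\ \omega_j\} \leq \omega_j$, the entire segment from $y$ to $P_j + c_j T_j(\theta)\theta$ stays inside $X$. For $x \notin R_j$: given that $x = P_j + t\theta \in X$ and $t > T_j(\theta)$, the supremum defining $T_j$ forces $d(x, P_j) > d(x, A_j)$, so $x \notin \dom(P_j, A_j) = R_j$. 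For $x \notin R_k$ when $k \neq j$: Lemma \ref{lem:disjoint}\eqref{item:RDom2R} applied to $R$ itself gives $d(R_j, R_k) \geq (r_j + r_k)/8 \geq r_j/8$, while $d(x, R_j) \leq t - T_j(\theta) \leq r_j/32$; if $x$ belonged to $R_k$, then $d(R_j, R_k) \leq d(R_j, x) \leq r_j/32 < r_j/8$, a contradiction.

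The volume estimate then follows by applying Lemma \ref{lem:SphericalTransformation} to the continuous function $c_j T_j$: one obtains $\vol(R_j \cup N_j) = (1/m)\int_{S^{m-1}}(c_j T_j(\theta))^m\,d\theta = c_j^m\,\vol(R_j)$, and since $R_j$ and $N_j$ overlap only on the measure-zero boundary $t = T_j(\theta)$, subtracting $\vol(R_j)$ yields \eqref{eq:cm-1}. The \emph{moreover} disjointness claim uses the same separation trick: an $x \in N_i \cap N_j$ would satisfy $d(x, R_i) \leq r_i/32$ and $d(x, R_j) \leq r_j/32$, forcing $d(R_i, R_j) \leq (r_i + r_j)/32$, which contradicts the lower bound $(r_i + r_j)/8$ from Lemma \ref{lem:disjoint}\eqref{item:RDom2R}. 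The step I expect to be the most delicate is handling the two intertwined fixed-point equations correctly: one must invoke Lemma \ref{lem:disjoint}\eqref{item:RDom2R} for both $R$ and $\Dom R$ to secure, respectively, the separation $d(R_j, R_k) \geq (r_j + r_k)/8$ used to place $N_j$ outside every $R_k$ with $k \neq j$, and the positivity $d(P_j, A_j) > 0$ used to invoke the continuity of $T_j$.
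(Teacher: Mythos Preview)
Your proof is correct and follows the paper's strategy: represent $R_j$ radially about $p_j$ via Theorem~\ref{thm:domInterval}, push outward into a shell, verify the shell lies in $N$ using the separation from Lemma~\ref{lem:disjoint}\eqref{item:RDom2R}, and compute volumes with Lemma~\ref{lem:SphericalTransformation}. The only noteworthy difference is that the paper extends the radius \emph{additively}, taking $N_j=\bigcup_\theta(a_\theta,a_\theta+\sigma_j\theta]$ with $\sigma_j=\min\{r_j/32,\omega_j\}$, and then uses the inequality $(T_j(\theta)+\sigma_j)^m\geq(1+\sigma_j/\rho_j)^mT_j(\theta)^m$ (coming from $T_j\leq\rho_j$) to arrive at $c_j^m\vol(R_j)$; your multiplicative dilation $T_j\mapsto c_jT_j$ lands on that bound directly, without the intermediate inequality. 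You are also more explicit than the paper in checking $d(P_j,A_j)>0$ (via the observation that $\Dom R$ is itself a fixed point of $\Dom^2$) so that Theorem~\ref{thm:contT} applies and $T_j$ is continuous.
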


\begin{proof}
Since $R=\Dom(\Dom(R))$ it follows that $R=\Dom(S)$ for some tuple $S$. In particular 
$R_j=\dom(P_j,A_j)$ for some subset $A_j$ of $X$. Thus, from Theorem ~\ref{thm:domInterval}, 
\begin{equation}\label{eq:R_j}
R_j=\bigcup_{|\theta|=1}[p_j,p_j+T_j(\theta)\theta] 
\end{equation}
where $T_j(\theta)=T_j(\theta,p_j)$ is defined in \eqref{eq:Tdef} and $P_j=\{p_j\}$. 
By Remark ~\ref{rem:TerritoryVoronoi} we know that $R_j$ is contained in the Voronoi region of $P_j$. 
Since Theorem \ref{thm:domInterval} holds for the Voronoi cell too, with, say, a corresponding 
function $\wt{T_j}$ defined in \eqref{eq:Tdef}, we have $T_j\leq \wt{T_j}$. Because \eqref{eq:BallRho} holds 
for this $\wt{T_j}$ (with $A=\cup_{k\neq j}P_k$), Lemma \ref{lem:BallRho} implies that $\wt{T_j}\leq \rho_j$ and thus 
$T_j(\theta)\leq \rho_j$ for any unit vector $\theta$.

Now fix a unit vector $\theta$. Let $a_{\theta}=p_j+T_j(\theta)\theta$, 
$b_{\theta}=a+\sigma_j\theta$ where $\sigma_j=\min\{r_j/32,\omega_j\}$. 
Consider the segment $(a_{\theta},b_{\theta}]$. We claim that 
any point in it belongs to the neutral region. Indeed, the definition of $T_j(\theta)$ 
 implies that the part of the ray in the direction of $\theta$ beyond $a$ is outside $R_j$. 
Any point $x\in (a_{\theta},b_{\theta}]$ is of distance at most $\omega_j$  from $R_j$ and hence 
\eqref{eq:omega_j} implies that $(a_{\theta},b_{\theta}]\subseteq X$. Finally, 
since $d(x,R_j)\leq r_j/8$ we conclude from Theorem \ref{thm:NeutralZone} and \eqref{eq:N_kR_k} 
that $(a_{\theta},b_{\theta}]\subseteq N$. 
The above discussion is true for any unit vector $\theta$. Hence the set 
\begin{equation}\label{eq:N_j}
N_j:=\bigcup_{\theta\in S^{m-1}}(a_{\theta},b_{\theta}] 
\end{equation}
is contained in $N$. Let $f:S^{m-1}\to (0,\infty)$ be defined by $f(\theta)=T_j(\theta)+\sigma_j$. 
Let $V$ be defined as in \eqref{eq:V}. Lemma \ref{lem:SphericalTransformation} and \eqref{eq:R_j} 
imply that $V$ and $R_j$ are measurable. Since  $N_j=V\backslash R_j$ it follows that it is measurable too. 
From Lemma \ref{lem:SphericalTransformation} we have 
\begin{equation}\label{eq:VolR_j}
\vol(R_j)=\frac{1}{m}\int_{S^{m-1}} (T_j(\theta))^m d\theta.  
\end{equation} 
Since $V$ is the disjoint union of $N_j$ and $R_j$, a second application of 
Lemma \ref{lem:SphericalTransformation} yields 
\begin{multline*}
\vol(N_j)+\vol(R_j)=\vol(V)=\frac{1}{m}\int_{S^{m-1}}{(f(\theta))}^{m}d\theta\\
=\frac{1}{m}\int_{S^{m-1}} (T_j(\theta)+\sigma_j)^m d\theta 
=\frac{1}{m}\int_{S^{m-1}} (T_j(\theta))^m \left(1+\frac{\sigma_j}{T_j(\theta)}\right)^md\theta\\
\geq \left(1+\frac{\sigma_j}{\rho_j}\right)^m \frac{1}{m}\int_{S^{m-1}} (T_j(\theta))^m  d\theta
\geq c_j^m \vol(R_j).
\end{multline*}
This implies \eqref{eq:cm-1}. Note that $T_j(\theta)>0$ for any $\theta$ since $p_j$ is in the interior 
of $X$ and any region of 
a double zone diagram of positively separated sites contains a small neighborhood around 
the site (see \cite[Observation 2.2]{KMT}). It remains to prove that $N_j\cap N_i=\emptyset$ whenever $i\neq j$. 
Indeed, let $x\in N_j\cap N_i$. The definition of $N_j$ (see \eqref{eq:N_j})
implies that $d(x,R_j)\leq \sigma_j\leq r_j/32$.  
In the same way $d(x,R_i)\leq r_i/32$. Thus $d(R_j,R_i)\leq (r_j/32)+(r_i/32)$, a contradiction with 
 Lemma~\ref{lem:disjoint}\eqref{item:RDom2R}.
\end{proof}

We are now able to prove Theorem \ref{thm:COM}. 
It says something which can be stated in a simple way intuitively (first paragraph) 
but, unfortunately, requires a somewhat long   formulation in order to achieve 
rigorousness (the remaining paragraphs). 

\begin{thm} \label{thm:COM}
Given a double zone diagram of separated point sites, if one restricts the attention 
to the part of the space occupied by the neutral region and the interior regions, 
then most of this volume concentrates at the neutral region as the dimension grows. 

More precisely, let $\rho>0$, $r>0$, and $\omega>0$ be given. 
Let $(X_m)_{m=2}^{\infty}$ be any sequence of closed and 
convex subsets satisfying $X_m\subset \R^m$. 
For each $m$ let $P_m=(P_{k,m})_{k\in K_m}$ be a tuple of sites in the interior of $X_m$.
Assume that for each $m$ 
\begin{equation}\label{eq:r}
\inf\{d(P_{k,m},P_{j,m}): j,k\in K_m\}\geq r.
\end{equation}
Assume that for each $m$ and for each $j\in K_m$ the relation 
\eqref{eq:BallRho} holds with the given $\rho$ and with $A_{j,m}=\bigcup_{k\neq j}P_{k,m}$. 
Let $R_m=(R_{k,m})_{k\in K_m}$ be a double zone diagram in $X_m$ 
corresponding to $P_m$. Let $J_m=\{j\in K_m: d(R_{j,m},\partial X_m)\geq\omega\}$. 
Assume that $J_m\neq \emptyset$. Let $F_m=\bigcup_{j\in J_m}R_{j,m}$ be the union of 
the interior regions (with parameter $\omega$). Let $N_m$ be the neutral region. Then 
\begin{equation}\label{eq:lim_1}
\lim_{m\to\infty}\frac{\vol(N_{m})}{\vol(F_m)+\vol(N_m)}=1
\end{equation}
with the agreement that $\infty/\infty=1$ if $\vol(N_m)=\infty$. 
As a matter of fact, if $c=1+\min\{r/(32 \rho),\omega/\rho\}$ and $\vol(F_m)<\infty$, then 
\begin{equation}\label{eq:Ocm}
\frac{\vol(F_{m})}{\vol(F_m)+\vol(N_m)}=O(c^{-m}). 
\end{equation}
\end{thm}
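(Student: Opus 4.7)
My plan is to apply Lemma~\ref{lem:ConcentrationOfMeasureNeutral} to every interior region simultaneously and then sum the resulting volume estimates, the key point being that the hypothesis constants in the lemma have uniform lower bounds coming from the global parameters $r$, $\rho$, $\omega$ of the theorem.

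First I would verify, for each fixed $m$ and each $j\in J_m$, that the hypotheses of Lemma~\ref{lem:ConcentrationOfMeasureNeutral} are met with uniform constants. The bound $d(R_{j,m},\partial X_m)\geq \omega$ is the defining property of $J_m$; the separation condition \eqref{eq:r_k} (applied to $P_m$) holds with $r_{j,m}\geq r$ by \eqref{eq:r}; and condition \eqref{eq:BallRho} with $A=\bigcup_{k\neq j}P_{k,m}$ holds with the given $\rho$ by assumption. The lemma then provides a measurable set $N_{j,m}\subseteq N_m$ with
\[
\vol(N_{j,m})\geq (c_{j,m}^{m}-1)\vol(R_{j,m}),\qquad c_{j,m}=1+\min\{r_{j,m}/(32\rho),\,\omega/\rho\}\geq c,
\]
and the ``moreover'' clause guarantees that the sets $N_{j,m}$ are pairwise disjoint as $j$ ranges over $J_m$.

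Next I would sum these estimates. The index set $J_m$ is at most countable: each $R_{j,m}$ contains an open neighborhood of $P_{j,m}$ (as noted inside the proof of Lemma~\ref{lem:ConcentrationOfMeasureNeutral}, via \cite[Observation~2.2]{KMT}), and by Lemma~\ref{lem:disjoint}\eqref{item:RDom2R} the regions $R_{j,m}$ are pairwise disjoint in $\R^m$, so separability of $\R^m$ forbids uncountably many of them. Countable additivity of Lebesgue measure, applied to the disjoint family $\{N_{j,m}\}_{j\in J_m}\subseteq N_m$ and to the disjoint family $\{R_{j,m}\}_{j\in J_m}$ whose union is $F_m$, therefore yields
\[
\vol(N_m)\geq \sum_{j\in J_m}\vol(N_{j,m})\geq (c^{m}-1)\sum_{j\in J_m}\vol(R_{j,m})=(c^{m}-1)\vol(F_m).
\]

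Finally I would translate this into the two claims. If $\vol(F_m)<\infty$ we get $\vol(F_m)+\vol(N_m)\geq c^{m}\vol(F_m)$, which is precisely the estimate \eqref{eq:Ocm}. If instead $\vol(F_m)=\infty$, then since $c>1$ gives $c^m-1>0$ for every $m\geq 2$, the same inequality forces $\vol(N_m)=\infty$ and the stated convention $\infty/\infty=1$ covers this case. In either case $c>1$ implies $c^{-m}\to 0$, so \eqref{eq:lim_1} follows immediately. I do not expect any serious obstacle: Lemma~\ref{lem:ConcentrationOfMeasureNeutral} has already done all of the geometric and measure-theoretic work, and the remaining argument is essentially the bookkeeping of a uniform lower bound on $c_{j,m}$ together with the countability observation needed to sum the contributions.
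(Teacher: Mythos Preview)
Your proposal is correct and follows essentially the same route as the paper: apply Lemma~\ref{lem:ConcentrationOfMeasureNeutral} to each $j\in J_m$ with the uniform constants $r,\rho,\omega$, use the pairwise disjointness of the $N_{j,m}$ and of the $R_{j,m}$ to sum, obtain $\vol(N_m)\geq (c^m-1)\vol(F_m)$, and read off \eqref{eq:Ocm} and \eqref{eq:lim_1}. Your explicit countability argument for $J_m$ (to justify $\sigma$-additivity) is a small refinement the paper leaves implicit.
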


\begin{proof}
Since $N_m$ is the difference of two measurable sets it is measurable and hence $\vol(N_m)$ 
is well defined. Given $j\in J_m$, Lemma \ref{lem:ConcentrationOfMeasureNeutral} 
implies the existence of a subset $N_{j,m}\subseteq N_m$ whose volume satisfies \eqref{eq:cm-1} 
with $c_j=c=1+\min\{(r/(32\rho),\omega/\rho\}$. 
Since $N_{j,m}\cap N_{i,m}=\emptyset$ and $R_{j,m}\cap R_{i,m}=\emptyset$ 
whenever $i\in J_m, j\neq i$, it follows that 
\begin{equation}\label{eq:N_mF_m}
\vol(N_m)\geq \sum_{j\in J_m}\vol(N_{j,m})\geq (c^m-1) \sum_{j\in J_m}\vol(R_{j,m})=(c^m-1)\vol(F_m).
\end{equation}
If $\vol(N_m)<\infty$, then $\vol(F_m)<\infty$ by \eqref{eq:N_mF_m} and therefore 
\begin{equation}\label{eq:F_m}
\frac{\vol(F_m)}{\vol(F_m)+\vol(N_m)}\leq \frac{\vol(F_m)}{c^m\vol(F_m)}=c^{-m}.
\end{equation}
Thus \eqref{eq:Ocm} and \eqref{eq:lim_1} follow. Otherwise, \eqref{eq:lim_1} follows trivially by our agreement. 
Note that $\vol(F_m)\geq\vol(R_{j,m})>0$ whenever $j\in J_m$ because any region of 
a double zone diagram of positively separated sites contains a small neighborhood around 
the site (see \cite[Observation 2.2]{KMT}). Hence \eqref{eq:F_m} is well defined. 
\end{proof}
\begin{remark}\label{rem:ForbiddenZone}
In a very recent paper \cite[Section IV]{BKMK2012}, a similar bound between the volume of two 
sets related to the ones discussed above has been established independently. 
Here the discussed object was called ``the forbidden zone'' with region $R_i$ and a site $P_i$, namely the set 
\begin{equation}
F(R_i,P_i)=\{z\in X: d(z,y)<d(y,P_i)\,\,\textnormal{for some}\,\,y\in R_i\}
\end{equation}
and its volume was compared to the volume of $R_i$. 
The setting was a convex region $R_i$ in the Euclidean space $X=\R^m$ and $P_i$ was a point 
in $R_i$, but actually the same definition holds with respect to any given sets and in any metric space 
and in fact in the setting of Theorem ~\ref{thm:NoNeutral} and Remark ~\ref{rem:GeneralDistance}. 

Although one cannot use the results established 
in \cite{BKMK2012} since some of the inclusions mentioned there are not true when $X$ is a 
convex subset of $\R^m$ and not the whole space, one can still use the idea of multiplying the region 
(after translating the set so the site will be the origin) by a small enough positive constant. Similar bounds 
as established in Lemma \ref{lem:ConcentrationOfMeasureNeutral} can be obtained (the factor $1/32$ can be improved 
here and in Lemma \ref{lem:ConcentrationOfMeasureNeutral}). The advantage here is that there is 
no need to use some of the imported results such as Theorems \ref{thm:domInterval} and \ref{thm:contT}. 
In addition, one can avoid some (but not all) of 
 the proof of Lemma \ref{lem:ConcentrationOfMeasureNeutral} and all of 
 Lemma \ref{lem:SphericalTransformation}. However, it 
 seems that one cannot avoid  Lemma \ref{lem:BallRho} and  Theorem \ref{thm:COM}. On the other hand, the 
 advantage of the approach mentioned  in this paper is that explicit expression for the volume 
 of the region is given, namely  \eqref{eq:VolR_j}, and this 
 expression may be useful in other scenarios as well. 
\end{remark}

\begin{remark}\label{rem:Density}
In Theorem \ref{thm:COM} it was assumed that at least one interior region exists. 
Hence it is of some interest to formulate sufficient conditions on the sites and the world 
which will imply this existence. It turns out that one such a condition is simply that the 
underline subsets $X_m$ are not too thin and that the sites form a quite dense distribution in $X_m$. 

More precisely, suppose that for each $m$  there exists a point $x_m\in X_m$ satisfying 
$d(x_m,\partial X_m)\geq (8/3)\omega$. This holds, e.g., if $X_m$ is a cube or a ball 
having radius at least $(8/3)\omega$ and $x_m$ is the centre. 
Now suppose that the sites are distributed in $X_m$ is such a way that for any $x\in X_m$ the 
open ball $B(x,(2/3)\omega)$ meets at least one of the sites. In other words, for each $x\in X_m$ 
we have $d(x,A_m)<(2/3)\omega$ where  $A_m=\cup_{k\in K_m}P_{k,m}$. This is a similar condition to \eqref{eq:BallRho}. 
The above statements are our (not necessary optimal) sufficient condition.

Indeed, the above implies the existence of some $k\in K_m$ such that $d(x_m,P_{k,m})<(2/3)\omega$. 
We claim that the Voronoi cell  of $P_{k,m}$ has a distance at least $\omega$ 
from the boundary of $X_m$. Once this is proved one recalls that the corresponding region $R_{k,m}$ 
of a double zone diagram $R_m$ is contained in its Voronoi cell (Remark \ref{rem:TerritoryVoronoi}) 
and hence its distance from $\partial X_m$ is at least $\omega$. 

In order to prove the assertion about the Voronoi cell of $P_{k,m}$, let $x$ be any point in this cell. 
Assume for a contradiction that $d(x,\partial X_m)<\omega$. By the assumption on the distribution 
of the sites there exists a (point) site $P_{j,m}$ satisfying $d(x,P_{j,m})<(2/3)\omega$. Since 
\begin{equation*}
(8/3)\omega\leq d(x_m,\partial X_m)\leq d(x_m,P_{k,m})+d(P_{k,m},\partial X_m)<(2/3)\omega+d(P_{k,m},\partial X_m),
\end{equation*} 
it follows that $d(P_{k,m},\partial X_m)\geq 2\omega$.  But the definition of the Voronoi cell 
of $P_{k.m}$ implies that $d(x,P_{k,m})\leq d(x,P_{j,m})<(2/3)\omega$. Thus 
\begin{equation*}
d(P_{k,m},\partial X_m)\leq d(P_{k,m},x)+d(x,\partial X_m)<(2/3)\omega+\omega<2\omega,   
\end{equation*} 
a contradiction. 
\end{remark}

\begin{figure}
\begin{minipage}[t]{1\textwidth}
\begin{center}
{\includegraphics[scale=0.56]{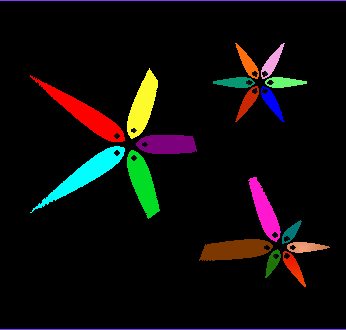}}
\end{center}
 \caption{The setting of Remark ~\ref{rem:InteriorRegions}: 17 sites in a rectangle in $(\R^2,\ell_2))$ and their corresponding 
 zone diagram. The regions are interior ones. The  bottom ``flower'' 
 has been obtained from the upper one by perturbing slightly the sites and then translating them as 
 a whole. }
\label{fig:ZD-Flowers-4Iteration-17Sites-0005}
\end{minipage}
\end{figure}

\begin{remark}\label{rem:InteriorRegions}
There are cases where all the regions are interior ones, as shown in Figure 
\ref{fig:ZD-Flowers-4Iteration-17Sites-0005} (related examples can be found in \cite[Fig. 4]{AMTn}). 
In such cases Theorem \ref{thm:COM} implies that the volume of the whole world concentrates at the 
neutral region as the dimension grows. It is interesting to find necessary and sufficient 
conditions which enforce this situation. Another interesting and related  phenomenon is 
a one shown in Figure \ref{fig:ZD-Flowers-4Iteration-17Sites-0005}: if a configuration of sites 
induces interior regions, then a small perturbation of the sites induce regions 
which are still interior ones. This property seems to be stable. 
However, experiments show that sometimes even slightly larger perturbations destroy this 
property. 

\end{remark}

\section{\bf Concluding remarks}\label{sec:ConcludingRemarks}
We end the paper with a few remarks about possible lines of investigation. 

Regarding the neutral Voronoi region, it may be interesting to discuss applications 
and variations of Theorem \ref{thm:NoNeutral} in the setting discussed there 
and also in other settings such as order-$k$ Voronoi diagrams  \cite[pp. 356-357]{Aurenhammer} 
or settings where there is a collection of distance functions corresponding 
to the site (instead of one global distance function) as in the cases 
of Voronoi diagrams induced by angular distances \cite{ATKT2006}, 
 weighted distances \cite[pp. 121-126]{OBSC} (additive, multiplicative), 
 power diagrams \cite{Aurenhammer1987}, \cite[pp. 380-386]{Aurenhammer}, and more. 

Regarding the neutral zone, it may be interesting to find better estimates for the size 
of this region than the ones given in Theorem ~\ref{thm:COM}. In particular, 
it is not clear whether the volume of the world concentrates 
at the neutral region if also boundary regions are taken into account. 
It is interesting to try to generalize Theorem ~\ref{thm:COM} for this case (or to find counterexamples) 
and also for the case of general sites. We believe that proving the existence of a neutral region 
in a context more general than Theorem ~\ref{thm:NeutralZone} is possible, with some caution 
(because of the counterexamples), e.g., for the case where several sites intersect, but we 
have no explicit result in this direction. It will also be interesting to 
answer the open problems mentioned in Remark ~\ref{rem:InteriorRegions}.

\section*{\bf Acknowledgments}
I thank Bahman Kalantari for sending me a copy of \cite{DeBiasiKalantaris2011},  
 Sedi  Bartz for helpful discussion related to Example \ref{ex:cluster}, and Iraj Kalantari and David 
 Menendez for helpful discussion related to Example \ref{ex:R=P}. 
\bibliographystyle{amsplain}
\bibliography{biblio}

\end{document}